\documentclass[10pt]{book}
\usepackage{enumitem}
\usepackage{array}
\textwidth=31.9pc
\textheight=46.5pc
\oddsidemargin=1pc
\evensidemargin=1pc
\headsep=15pt
\topmargin=.6cm
\parindent=1.7pc
\parskip=0pt

\setcounter{page}{1}
\usepackage{comment}
\usepackage{enumitem}
\usepackage[tbtags]{amsmath}
\usepackage{amsthm}
\usepackage{amssymb}
\usepackage{amsmath}
\usepackage{amsfonts}
\usepackage{epsfig}
\usepackage{graphicx}
\usepackage[small]{caption}
\usepackage[sort&compress]{natbib}
\bibpunct{(}{)}{;}{a}{}{,}
\bibliographystyle{asa}
\usepackage{longtable}


\newcommand{\commentt}[1]{}

\newcommand{\that}        {\mbox{$\hat{\boldsymbol{\theta}}$}}

\newcommand{\thhj}{\hat{\theta}_j}

\newcommand{\V}{\text{Var}}

\newcommand{\tr}{\text{tr}}

\newcommand{\bhat}{\boldsymbol{\hat{\beta}}}

\newcommand{\pxv}{{P}_X^V}

\newcommand{\bu}{\boldsymbol{u}}

\newcommand{\uw}{\boldsymbol{w}}
\newcommand{\bz}{\boldsymbol{z}}

\newcommand{\bzero}{\boldsymbol{0}}

\newcommand{\util}{\boldsymbol{\tilde{u}}}

\newcommand{\btil}{\boldsymbol{\tilde{\beta}}}

\newcommand{\bxj}{\boldsymbol{x_{j}}}
\newcommand{\bei}{\boldsymbol{e_{i}}}
\newcommand{\bej}{\boldsymbol{e_{j}}}

\newcommand{\sig}        {\Sigma}

\newcommand{\thi}        {\theta_i}
\newcommand{\thhi}        {\mbox{$\hat{\theta}_i$}}

\newcommand{\thbi}        {\mbox{$\hat{\theta}_i^B$}}
\newcommand{\thbis}        {\mbox{$\hat{\theta}_{i*}^B$}}

\newcommand{\thbj}        {\mbox{$\hat{\theta}_j^B$}}
\newcommand{\thbjs}        {\mbox{$\hat{\theta}_{j*}^B$}}

\newcommand{\thebi}       {\mbox{$\hat{\theta}_i^{EB}$}}
\newcommand{\thebj}       {\mbox{$\hat{\theta}_j^{EB}$}}
\newcommand{\theblupi}    {\mbox{$\hat{\theta}_i^{EBM1}$}}
\newcommand{\thbmi}    {\mbox{$\hat{\theta}_i^{BM1}$}}

\newcommand{\thhw}       {\mbox{$\bar{\hat{\theta}}_{w}$}}

\newcommand{\thbarw}     {\mbox{$\bar{\hat{\theta}}_{w}^{B}$}}
\newcommand{\thbarwb}     {\mbox{$\bar{\hat{\theta}}_w^{B}$}}

\newcommand{\thbarweb}     {\mbox{$\bar{\hat{\theta}}_w^{EB}$}}

\newcommand{\su}    {\sigma_u^2}

\newcommand{\sut}     {\mbox{$\tilde{\sigma}_u^2$}}
\newcommand{\suh}     {\mbox{$\hat{\sigma}_u^2$}}
\newcommand{\sus}     {\mbox{${\sigma}_u^{*2}$}}
\newcommand{\sust}     {\mbox{$\tilde{\sigma}_u^{*2}$}}

\newcommand{\bb}     {\mbox{${\boldsymbol{\beta}}$}}

\newcommand{\lp}     {\left(}
\newcommand{\rp}     {\right)}
\newcommand{\lb}     {\left[}
\newcommand{\rb}     {\right]}

\newcommand{\xg}     {\mbox{$(X'V^{-1}X)^{-1}X'$}}
\newcommand{\bxi}{\boldsymbol{x}_i}

\newcommand{\ma}     {\max_{1 \leq i \leq m}}

\newcommand{\cov}     {\text{Cov}}

\newtheorem{theorem}{Theorem}

\newtheoremstyle{lemma}
{\topsep} 
{\topsep} 
{\it} 
{} 
{\bf} 
{:} 
{0.5em} 
{} 
\theoremstyle{lemma}
\newtheorem{lem}{Lemma}

\newtheoremstyle{example}
{\topsep} 
{\topsep} 
{} 
{} 
{\bf} 
{:} 
{0.5em} 
{} 
\theoremstyle{example}

\newtheoremstyle{remark}
{\topsep} 
{\topsep} 
{} 
{} 
{} 
{:} 
{0.5em} 
{} 
\theoremstyle{remark}
\newtheorem{rem}{Remark}

\newcommand{\sustt}     {\mbox{${\hat{\sigma}}_u^{*2}$}}
\newcommand{\thebs}     {\hat{\theta}_i^{EB*}}

\begin{document}
\renewcommand{\baselinestretch}{1.2}
\markright{
}
\markboth{\hfill{\footnotesize\rm REBECCA C. STEORTS AND MALAY GHOSH
}\hfill}
{\hfill {\footnotesize\rm ESTIMATION OF MEAN SQUARED ERRORS} \hfill}
\renewcommand{\thefootnote}{}
$\ $\par
\fontsize{10.95}{14pt plus.8pt minus .6pt}\selectfont
\vspace{0.8pc}
\centerline{\large\bf ON ESTIMATION OF MEAN SQUARED ERRORS OF}
\vspace{2pt}
\centerline{\large\bf  BENCHMARKED EMPIRICAL BAYES ESTIMATORS}
\vspace{.4cm}
\centerline{Rebecca C. Steorts and Malay Ghosh}
\vspace{.4cm}
\centerline{\it Carnegie Mellon University and University of Florida}
\vspace{.55cm}
\fontsize{9}{11.5pt plus.8pt minus .6pt}\selectfont

%

\excludecomment{notreport}
\includecomment{report}






\begin{quotation}
\noindent {\it Abstract:}

We consider benchmarked empirical Bayes (EB) estimators under the basic area-level model of Fay and Herriot while requiring the standard benchmarking constraint. In this paper we determine the excess mean squared error (MSE) from constraining the estimates through benchmarking. We show that the increase due to benchmarking is $O(m^{-1}),$ where $m$ is the number of small areas.
Furthermore, we find an asymptotically unbiased estimator of this MSE and compare it to the second-order approximation of the 
MSE of the EB estimator or, equivalently, of the MSE of the empirical best linear unbiased predictor
(EBLUP), that was derived by \citet*{prasad_1990}. Morever, using methods similar to those of \citet*{butar_2003}, we compute a parametric bootstrap estimator of the MSE of the benchmarked EB estimator under the Fay-Herriot model and compare it to the MSE of the benchmarked EB estimator found by a second-order approximation. Finally, we illustrate our methods using SAIPE data from the U.S. Census Bureau, and in a simulation study.

\vspace{9pt}
\noindent {\it Key words and phrases:}
Small-area, Fay-Herriot, Mean Squared Error, Empirical Bayes, Benchmarking, Parametric Bootstrap \par
\end{quotation} \par

\fontsize{10.95}{14pt plus.8pt minus .6pt}\selectfont

\setcounter{chapter}{1}
\setcounter{equation}{0}
\noindent {\bf 1. Introduction}

Small area estimation has become increasingly popular recently due to a growing demand for such statistics. It is well known that direct small-area estimators usually have large standard errors and coefficients of variation. In order to produce estimates for these small areas, it is necessary to borrow strength from other related areas. Accordingly, model-based estimates often differ widely from the direct estimates, especially for areas with small sample sizes. One problem that arises in practice is that the model-based estimates do not aggregate to the more reliable direct survey estimates.
Agreement with the direct estimates is often a political necessity to convince legislators of the utility of small area estimates.
The process of adjusting model-based estimates to correct this problem is known as benchmarking. Another key benefit of benchmarking is protection against model misspecification as pointed out by \citet*{you_2004} and \citet*{datta_2011}. 

In recent years, the literature on benchmarking has grown. Among others, \citet*{pfeffermann_1991}; \citet*{you_2003}; \citet*{you_2004}; \citet*{pfeffermann_2006}; and \citet*{ugarte_2009} have made an impact on the continuing development of this field.  Specifically, \citet*{wang_2008} provided a frequentist method wherein an augmented model was used to construct a best linear unbiased predictor (BLUP) that automatically satisfies the benchmarking constraint.  In addition, \citet*{datta_2011} developed very general benchmarked Bayes estimators, that covered most of the earlier estimators that were motivated from either a frequentist or Bayesian perspective. Specifically, they found benchmarked Bayes estimators under the \mbox{\citet*{fay_1979}} model. 

%

Due to the fact that they borrow strength, model-based estimates typically show a substantial improvement over direct estimates in terms of mean squared error (MSE). It is of particular interest to determine how much of this advantage is lost by constraining the estimates through benchmarking. The aforementioned work of \citet*{wang_2008} 
and \citet*{ugarte_2009} examined this question through simulation studies but did not derive any probabilistic results. They showed that the MSE of the benchmarked EB estimator was slightly larger than the MSE of the EB estimator for their simulation studies. In Section 3, we derive a second-order approximation of the MSE of the benchmarked Bayes EB estimator to show that the increase due to benchmarking is $O(m^{-1}),$ where $m$ is the number of small areas.


In this paper, we are concerned with the basic area-level model of \citet*{fay_1979}. We propose benchmarked EB estimators in Section 2. In Section 3, we derive a second-order asymptotic expansion of the MSE of the benchmarked EB estimator. In Section~4, we find an estimator of this MSE and compare it to the
second-order approximation of the MSE of the EB estimator or, equivalently, the 
MSE of the EBLUP, that was derived by \mbox{\citet*{prasad_1990}}. 
Finally, in Section 5, using methods similar to those of \mbox{\citet*{butar_2003}}, we compute a parametric bootstrap estimator of the mean squared error of the benchmarked EB estimator under the Fay-Herriot (1979) model and compare it to our estimators from Section 2. Section 6 contains an application based on Small Area Income and Poverty Estimation Data (SAIPE) from the U.S. Census Bureau as well as a simulation study. Some concluding remarks are made in Section 7.\par

\setcounter{chapter}{2}
\setcounter{equation}{0} 
\noindent {\bf 2. Benchmarked Empirical Bayes Estimators}
\label{sect2}

Consider the area-level random effects model
\begin{align}
\label{re}
\thhi = \thi + e_i,\quad \thi = \bxi^T \bb + u_i,\quad i=1,\ldots,m;
\end{align}
where $e_i$ and $u_i$ are mutually independent with 
$e_i \stackrel{ind}{\sim} N(0,D_i)$ and 
$u_i \stackrel{iid}{\sim} N(0,\su).$ This model was first 
considered in the context of estimating income for small areas (population less than 1000) by \citet*{fay_1979}.
In (\ref{re}), the $D_i$ are known as are the $p \times 1$ design vectors $\bxi.$ However, the vector of regression coefficients $\bb_{p \times 1}$ is unknown. 

When the variance component $\su$ is known and $\bb$ has a uniform prior on $\mathbb{R}^p,$ then the Bayes estimator of $\thi$ is given by $\thbi = (1-B_i)\thhi + B_i\bxi^T\btil$ where $B_i = D_i(\su + D_i)^{-1},$ $\btil \equiv  \btil(\su) = \xg V^{-1}\that,$ and $V = \text{Diag}(\su + D_1, \ldots, \su + D_m).$ Suppose now we want to match the weighted average of some estimates $\delta_i$ to the weighted average of the direct estimates, which we denote by~$t.$ We assume for our calculations that 
$t = \sum_i w_i\hat{\theta_i} =: \thhw.$ We denote the normalized weights by $w_i,$ so that $\sum_i w_i = 1.$  Under the loss $L(\theta, \delta) = 
\sum_i w_i (\thi - \delta_i)^2,$ and subject to $\sum_i w_i \delta_i = \sum_i w_i \thhi, $ the benchmarked Bayes estimator derived in \citet*{datta_2011} is 
\begin{align}
\label{bench1}
\thbmi &= \thbi + (\thhw-\thbarw),\quad i=1,\ldots,m;
\end{align}
where $\thbarw = \sum_i w_i \hat{\theta}_i^B.$
In more realistic settings, $\su$ is unknown. Let $P_X = X(X^TX)^{-1}X^T,$ $h_{ij} = \bxi^T(X^TX)^{-1}\bxj,$ $\hat{u}_i = \thhi - \bxi^T\bhat,$ and $\bhat = (X^TX)^{-1}X^T\that.$ In this paper, we consider the simple moment estimator given by $\suh = \max\{0,\sut\}$ where $\sut= (m-p)^{-1}\left[
\sum_{i=1}^m \hat{u}_i^2 - \sum_{i=1}^m D_i (1 - h_{ii})
\right]$, which is given in \citet*{prasad_1990}. Then the  benchmarked EB estimator of $\thi$ is 
\begin{align}
\label{fayest}
\theblupi = \thebi + (\thhw-\thbarweb),
\end{align}
where $\thebi = (1-\hat{B}_i)\thhi + \hat{B}_i\bxi^T \btil(\suh),\; \hat{B}_i = D_i(\suh + D_i)^{-1},\;
i=1,\ldots,m.$ The objective of the next two sections will be to obtain the MSE of the benchmarked EB estimator correct up to $O(m^{-1}) $ and also to find an estimator of the MSE correct to the same order. \par  
\setcounter{chapter}{3}
\setcounter{equation}{0}
\noindent {\bf 3. Second-Order Approximation to MSE}

\cite{wang_2008} construct a simulation study to compare the MSE of the benchmarked EB estimator to the MSE of the EB estimator. 
In this section, we derive a second order expansion for the MSE of the benchmarked Bayes estimator under the same regularity conditions and assuming the standard benchmarking constraint. That is, for the model proposed in Section 2, we obtain a second-order approximation to the MSE of the empirical benchmarked Bayes estimator derived in Section~2. Take $h_{ij}^V = \bxi^T(X^TV^{-1}X)^{-1}\bxj$ and assume that $\su > 0.$ 
Establishing Theorem \ref{thm_mse} requires the regularity conditions 
\begin{itemize}[itemsep=-5.5pt]
\item[(i)]  $0 <D_L \leq \inf_{1 \leq i \leq m}D_i  \leq \sup_{1 \leq i \leq m}D_i \leq D_U < \infty;$
\item[(ii)] $\max_{1 \leq i \leq m} h_{ii} = O(m^{-1});$ and
\item[(iii)] $\ma w_i = O(m^{-1}).$
\end{itemize}


Condition $(iii)$ requires a kind of homogeneity of the small areas, and in particular, it assumes there are not a few large areas that dominate the others in terms of the $w_i.$ Conditions $(i)$ and $(ii)$ are similar to those of \citet*{prasad_1990} and are often assumed in the small area estimation literature.

Before stating Theorem 1, we first present some lemmas whose proofs are provided in the supplementary material and are used in the proof of Theorem ~1. The proof of Theorem~1 can be found in Appendix B.

\begin{lem} Let $r>0$ be arbitrary. Then
\label{lem_comb1}
\begin{itemize}
\item[(i)]
\label{part_thib}
$E\lb \left\{\dfrac{\partial \thbi}{\partial \su}\right\}^{2r} \rb = O(1),$ and
\item[(ii)]
\label{part_thibs_sq}
$E\lb\sup_{\su \geq 0}\left|\dfrac{\partial^2 \thbi}{\partial (\su)^2}\right|^{2r} \rb = O(1).$ 
\end{itemize}
\end{lem}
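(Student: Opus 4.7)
The plan is to differentiate $\thbi = (1-B_i)\thhi + B_i\bxi^T\btil$ in $\su$ explicitly, decompose each derivative as a finite sum of linear forms $\sum_j c_j(\su)\hat{\theta}_j$ with coefficients bounded uniformly in $\su \geq 0$, and then apply Gaussian moment inequalities under the regularity conditions (i)--(iii). Using $\partial V^{-1}/\partial\su = -V^{-2}$ and $\partial A^{-1}=-A^{-1}(\partial A)A^{-1}$ with $A=X^TV^{-1}X$, one obtains
\begin{equation*}
\frac{\partial\btil}{\partial\su}=-(X^TV^{-1}X)^{-1}X^TV^{-2}(\that-X\btil),
\end{equation*}
together with a longer but analogous formula for $\partial^2\btil/\partial(\su)^2$. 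Combined with $\partial B_i/\partial\su=-D_i/(\su+D_i)^2$ and $\partial^2 B_i/\partial(\su)^2=2D_i/(\su+D_i)^3$, these expressions show that each derivative of $\thbi$ is a finite linear combination of the $\hat{\theta}_j$'s whose coefficients are products of a factor $\bxi^T(X^TV^{-1}X)^{-1}\bxj$ (or no such factor, in the pieces coming from the $\partial B_i \cdot \thhi$ term) with powers of $1/(\su+D_k)$.

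Next I would derive uniform coefficient bounds. Under condition (i), $V^{-1}\succeq(\su+D_U)^{-1}I$, so $(X^TV^{-1}X)^{-1}\preceq(\su+D_U)(X^TX)^{-1}$, and Cauchy-Schwarz combined with $(\su+D_U)/(\su+D_L)\leq D_U/D_L$ on $[0,\infty)$ yields
\begin{equation*}
\sup_{\su\geq 0}|c_j(\su)|\leq C\sqrt{h_{ii}h_{jj}}
\end{equation*}
for the coefficients involving $(X^TV^{-1}X)^{-1}$, with $C$ depending only on $D_L,D_U$; the remaining single coefficient of $\hat{\theta}_i$ is bounded by $D_U/D_L^2$. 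By condition (ii) and $\sum_j h_{jj}=p$, one therefore obtains $\sum_j c_j(\su)^2=O(m^{-1})$ for the non-diagonal pieces, uniformly in $\su\geq 0$.

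The last step is a Gaussian moment estimate. Under the model, $\that$ is Gaussian with $E\hat{\theta}_j=\bxj^T\bb$ and variances bounded by $\su+D_U$, so for any such form $S=\sum_j c_j(\su)\hat{\theta}_j$ Minkowski gives $E|S|^{2r}\leq C_r(\sum_j c_j^2)^r+|\sum_j c_j \bxj^T\bb|^{2r}$. The variance piece is $O(1)$ by the previous paragraph; the mean piece is $(\bxi^T\bb)^{2r}=O(1)$ for the coefficients in $\bxi^T\btil$, since $\sum_j c_j \bxj^T\bb=\bxi^T\bb$, and it vanishes for the coefficients arising in the derivatives of $\btil$, because $E[\partial\btil/\partial\su]=\partial E[\btil]/\partial\su=0$ by unbiasedness of $\btil$. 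Summing the finite number of such terms yields part (i); for part (ii) one absorbs the supremum over $\su$ into the uniform coefficient bound of the previous paragraph.

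The main obstacle is securing the uniform-in-$\su$ coefficient bound needed for part (ii): the matrix $(X^TV^{-1}X)^{-1}$ by itself grows linearly in $\su$ and cannot be bounded on $[0,\infty)$. The cancellation with the extra $V^{-k}$ factors present in each derivative, together with the inequality $(\su+D_U)/(\su+D_L)\leq D_U/D_L$, is what delivers the uniform bound; once that is in hand, parts (i) and (ii) follow by the same Gaussian calculation.
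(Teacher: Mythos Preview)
Your strategy matches the paper's: differentiate explicitly and bound Gaussian moments. The paper, however, writes both derivatives as linear forms in the residual $\util=\that-X\btil$, which is mean-zero for every value of the argument $\su$; this sidesteps any mean term entirely. You instead expand in $\that$ and therefore must handle the mean, and at that step you invoke $(\bxi^T\bb)^{2r}=O(1)$ for the pieces coming from $(\partial B_i/\partial\su)\,\thhi$ and $(\partial B_i/\partial\su)\,\bxi^T\btil$. That boundedness is \emph{not} among the regularity conditions (i)--(iii) governing this lemma; it enters the paper only as an additional hypothesis in Lemma~\ref{lem_comb2}. The repair is immediate: group those two pieces into $(\partial B_i/\partial\su)(\bxi^T\btil-\thhi)=-(\partial B_i/\partial\su)\,\bei^T\util$, whose mean vanishes, so that every term is a linear form in $\bu=\that-X\bb$ and no assumption on $\bxi^T\bb$ is needed.

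A related wrinkle in part (ii): once you pass to $\sup_{\su\geq0}|S(\su)|\leq\sum_j\bar c_j\,|\hat\theta_j|$, the bounding variable is a sum of absolute values, not a Gaussian linear form, so your decomposition $E|S|^{2r}\leq C_r(\sum_j c_j^2)^r+|\text{mean}|^{2r}$ no longer applies to it. After centering as above, Cauchy--Schwarz on the off-diagonal block (your $\sum_{j}\bar c_j^2=O(m^{-1})$ against $E(\sum_j u_j^2)^r=O(m^r)$, $u_j=\hat\theta_j-\bxj^T\bb$) together with the single diagonal term $\bar c_i|u_i|$ handled on its own gives the required $O(1)$. This is the coefficient-level analogue of what the paper does by bounding the corresponding quadratic forms in $\util$ by $(\max_i h_{ii})\cdot\bu^TV^{-1}\bu$ and using $\bu^TV^{-1}\bu\sim\chi^2_m$.
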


\begin{notreport}
\begin{proof}[Proof of Lemma \ref{lem_comb1}(i)]
Recall $\util = \that - X\btil$ and define $\bu = \that - X\bb.$
Recall $\thbi = (1-B_i)\thhi +B_i\bxi^T\btil.$ Since 
$\dfrac{\partial \btil}{\partial \su} = -(X^TV^{-1}X)^{-1}X^TV^{-2}\that,$ we can easily show that
$
\dfrac{\partial \thbi}{\partial \su} 
= [B_i^2D_i^{-1}\bei^T - B_i\bxi^T(X^TV^{-1}X)^{-1}X^TV^{-2}]\util.
$
This implies that 
$$\left|\dfrac{\partial \thbi}{\partial \su}\right| 
\leq |D_L^{-1}\bei^T\util| + |\bxi^T(X^TV^{-1}X)^{-1}X^TV^{-2}\util|. $$

Then
\begin{align}
\left|\dfrac{\partial \thbi}{\partial \su}\right|^{2r} \notag
&\leq 2^{2r-1}D_L^{-2r}|\bei^T\util|^{2r}
+ 2^{2r-1}|\bxi^T(X^TV^{-1}X)^{-1}X^TV^{-2}\util|^{2r} \notag \\
&\leq 2^{2r-1}D_L^{-2r}|\thhi - \bxi^T\btil|^{2r}
+ 
2^{2r-1} \lb\bxi^T(X^TV^{-1}X)^{-1}\bxi \util^T V^{-3} \util\rb ^{r}\notag \\
&\leq 
2^{2r-1}D_L^{-2r}|\thhi - \bxi^T\btil|^{2r}
+ 2^{2r-1} \lb (\ma h_i) (\su + D_U)(\su + D_L)^{-1} D_L^{-1} \bu^T V^{-1} \bu\rb ^{r}\notag \\
&\leq 2^{2r-1}D_L^{-2r}|\thhi - \bxi^T\btil|^{2r}
+2^{2r-1} \lb  (1 + D_UD_L^{-1})D_L^{-1}(\ma h_i) \bu^T V^{-1} \bu\rb ^{r} = O_p(1) \notag.
\end{align}
since $\thhi - \bxi^T\btil \sim N(0,V_i),$ $\ma h_i = O(m^{-1}),$ and $\bu^TV^{-1}\bu \sim \chi^2_m.$
This implies that $E\lb\left|\dfrac{\partial \thbi}{\partial \su}\right|^{2r}\rb = O(1).$ 
\end{proof}
\end{notreport}
%
\begin{notreport}
\begin{proof}[Proof of Lemma \ref{lem_comb1}(ii)]
Recall $\pxv = X(X^TV^{-1}X)^{-1}X^TV^{-1}, \util = \that - X\btil,$ and \mbox{$\bu = \that - X\bb.$}
Knowing  $\dfrac{\partial \btil}{\partial \su} = -(X^TV^{-1}X)^{-1}X^TV^{-2}\that,$ we find that
\begin{align*} 
\dfrac{\partial^2 \thbi}{\partial (\su)^2} &= 
-2B_i^3D_i^{-2}\bei^T\util + 2B_i^2D_i^{-1}\bei^T\pxv V^{-1}
 \util \\
&+ 2B_i\bei^T \pxv V^{-2} \util - 2 B_i \bei^T \pxv V^{-1}
 \pxv V^{-1} \util.
\end{align*}
Then we find
\begin{align}
\label{sec_der}
\left|\dfrac{\partial^2 \thbi}{\partial (\su)^2} \right|
&\leq 
|2B_i^3D_i^{-2}\bei^T\util| + |2B_i^2D_i^{-1}\bei^T\pxv V^{-1}
 \util|\notag\\
 &+ |2B_i\bei^T \pxv V^{-2} \util| + |2 B_i \bei^T \pxv V^{-1}
 \pxv V^{-1} \util|.
\end{align} 
Define $\Omega = (X^TV^{-1}X)^{-1}(X^TV^{-2}X)(X^TV^{-1}X)^{-1}(X^TV^{-2}X)(X^TV^{-1}X)^{-1}.$ Using
our expression in (\ref{sec_der}), we find that
\begin{align}
\label{first_der}
&\left|\dfrac{\partial^2 \thbi}{\partial (\su)^2} \right|^{2r} \notag \\
& \qquad \leq 4^{3r-1} \bigg[
|B_i^3D_L^{-2}\bei^T\util|^{2r}
+ B_i^{4r}D_L^{-2r} [\bxi^T(X^TV^{-1}X)^{-1}\bxi\util^TV^{-3}\util]^r \notag \\
&\qquad + B_i^{2r}[\bxi^T(X^TV^{-1}X)^{-1}\bxi\util^TV^{-5}\util]^r\notag  + B_i^{2r} [\bxi^T \Omega \bxi\util^TV^{-3}\util]^r \bigg]\notag \\
&\qquad \leq  4^{3r-1}\bigg[
B_i^{r}D_L^{-4r}|\thhi - \bxi^T\bb|^{2r}
+D_L^{-3r} [  (1 + D_U D_L^{-1}) (\ma h_i) \util^TV^{-1}\util]^r\notag \\
&\qquad +
D_L^{-3r}[ (1 + D_U D_L^{-1})  (\ma h_i) \util^TV^{-1}\util]^r]\notag\\
&\qquad+ 
 D_L^{-3r} [(1 + D_U D_L^{-1})(\ma h_i ) \util^TV^{-1}\util]^r
\bigg]\notag \\
&\qquad =   4^{3r-1} \bigg[ B_i^{r}D_L^{-4r}|\thhi - \bxi^T\bb|^{2r}
+ 3D_L^{-3r} [(1 + D_U D_L^{-1})(\ma h_i ) \util^TV^{-1}\util]^r \bigg].
\end{align}

From equation (\ref{first_der}) and since $\util^TV^{-1}\util \leq \bu V^{-1} \bu,$
it follows that
\begin{align*}
\sup_{\su \geq 0} \left|\dfrac{\partial^2 \thbi}{\partial (\su)^2} \right|
&\leq 
\sup_{\su \geq 0} \lb
4^{3r-1} \lp
B_i^rD_L^{-4r}|\thhi - \bxi^T\bb|^{2r}
+ 3 \lb D_L^{-4r}(1 + D_UD_L^{-1})(\ma h_i) \bu^TV^{-1}\bu\rb^r
\rp
\rb\\
&\leq 4^{3r-1}D_L^{-4r}\lb
\sup_{\su \geq 0} B_i^r|\thhi - \bxi^T\bb|^{2r}
+ 3 D_L^r (1 + D_UD_L^{-1})^r(\ma h_i)^r
\sup_{\su \geq 0} (\bu^TV^{-1}\bu)^r
\rb\\
&\leq 4^{3r-1}D_L^{-4r}\lb
D_U^r \sup_{\su \geq 0}\left|\frac{\thhi - \bxi^T\bb}{(\su + D_i)^{1/2}}\right|^{2r}
+ 3 (D_L + D_U)^r (\ma h_i)^r
\sup_{\su \geq 0}(\bu^TV^{-1}\bu)^r \rb
\end{align*}
where for all $\su \geq 0,$ 
$\left|\dfrac{\thhi - \bxi^T\bb}{(\su + D_i)^{1/2}}\right|^{2r} 
\stackrel{d}{=} |Z|^{2r}$ and $Z \sim N(0,1).$ Also, for all $\su \geq 0,$ $(\bu^TV^{-1}\bu)^r \stackrel{d}{=}
W_m^r,$ where $W_m \sim \chi_m^2.$ This implies that
\begin{align*}
E\lb\sup_{\su \geq 0}\left|\dfrac{\partial^2 \thbi}{\partial (\su)^2}\right|^{2r} \rb
&\leq
4^{3r-1}D_L^{-4r}\lb
D_U^r E[|Z|^{2r}] + 3 (D_L + D_U)^r (\ma h_i)^r E[W_m^{r}]
\rb = O(1).
\end{align*}
\end{proof}
\end{notreport}

Recall that $\bu = \that - X\bb \sim N(0,V).$ The results below then follow.

\begin{lem} 
\label{lem_comb2}
Let $r>0$ and assume $\max_{1 \leq i \leq m} \bxi^T\bb = O(1).$ Then 
\begin{align*}
||\that - X\btil||^{2r}
 = O_p(m^r)
\quad  \text{and} \quad
E\left[
||\that - X\btil||^{2r}
\right] = O(m^r). 
\end{align*}
\end{lem}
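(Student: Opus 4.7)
\smallskip
\noindent\textbf{Proof plan.} The plan is to reduce the norm of $\util = \that - X\btil$ to a scaled chi-square random variable and then read off both conclusions from standard moment and consistency properties of the chi-square distribution. First I would rewrite
$\util = (I - \pxv)\that$, where $\pxv = X(X^TV^{-1}X)^{-1}X^TV^{-1}$, and use $\pxv X = X$ together with $\that = X\bb + \bu$ (where $\bu \sim N(\bzero, V)$) to conclude $\util = (I-\pxv)\bu$. A useful feature of this step is that $\bb$ cancels entirely, so the assumption $\max_i \bxi^T\bb = O(1)$ will play no role in what follows.

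Next I would compute $\util^T V^{-1}\util$ by a standardization. Setting $\tilde{\bu} = V^{-1/2}\bu \sim N(\bzero, I_m)$ and $\tilde{X} = V^{-1/2}X$, a direct calculation gives $V^{-1/2}(I-\pxv)V^{1/2} = I - P_{\tilde{X}}$, so $V^{-1/2}\util = (I - P_{\tilde{X}})\tilde{\bu}$. Since $I - P_{\tilde{X}}$ is an orthogonal projection of rank $m-p$, we get $\util^T V^{-1}\util \sim \chi^2_{m-p}$. Because $V$ is diagonal with entries $\su + D_i$ satisfying $\su + D_L \le V_i \le \su + D_U$ (by condition (i)), we can bound
\begin{equation*}
\|\util\|^2 \;=\; \sum_i \util_i^2 \;\le\; (\su + D_U) \sum_i \util_i^2/V_i \;=\; (\su + D_U)\,\util^T V^{-1}\util.
\end{equation*}

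Finally I would raise this to the $r$-th power to obtain $\|\util\|^{2r} \le (\su + D_U)^r W_{m-p}^r$ where $W_{m-p} \sim \chi^2_{m-p}$, and conclude both statements: the moment identity $E[W_k^r] = 2^r\Gamma(k/2+r)/\Gamma(k/2) = O(k^r)$ yields $E[\|\util\|^{2r}] = O(m^r)$, while $W_{m-p}/(m-p) \to 1$ in probability (and hence $W_{m-p}/m = O_p(1)$) gives $\|\util\|^{2r} = O_p(m^r)$. There is no real obstacle here: the only substantive step is recognizing the chi-square structure via the $V^{-1/2}$ standardization, after which the remainder is a one-line bound plus standard tail/moment facts.
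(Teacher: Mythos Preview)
Your proof is correct and takes essentially the same route as the paper: both bound $\|\util\|^2 \le (\su+D_U)\,\util^T V^{-1}\util$ via the diagonal bound on $V$ and then read off the conclusions from chi-square moments. The paper phrases it slightly differently, passing through $\Sigma = V - X(X^TV^{-1}X)^{-1}X^T$ and writing $\util^T\Sigma^{-1}\util \sim \chi^2_m$ before invoking $\Sigma^{-1}\ge V^{-1}$; your direct identification $\util^T V^{-1}\util \sim \chi^2_{m-p}$ via the $V^{-1/2}$ standardization is cleaner (it sidesteps the singularity of $\Sigma$ and gives the correct degrees of freedom), though the resulting order bound is of course identical.
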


\begin{notreport}
\begin{proof}[Proof of Lemma \ref{lem_comb2}]
Recall $\util = \that - X\btil.$
Then $\sig^{-1/2}\util \sim N(0, I).$ Let  $W = \util^T\sig^{-1}\util \sim \chi_m^2$ and observe 
$\util^T\sig^{-1}\util \geq \util V^{-1} \util
\geq \util^T (\su + D_U)^{-1} I \util = (\su + D_U)^{-1} ||\util ||^2.$ This implies that $||\util||^{2r} \leq (\su + D_U)^{r} W^{r} = O_p(m^r). $ Also, $$E||\util||^{2r} \leq E\lb(\su + D_U)^{r} W^{r}\rb = O(m^{r}).\qedhere$$
\end{proof}
\end{notreport}


\begin{lem}
\label{trab}
Let $\bz\sim N_p(\bzero,\Sigma).$ For matrices $A_{p\times p}$ and $B_{p\times p},$ where $B$ symmetric, we have
\begin{itemize}
\item[(i)]$\text{Cov}(\bz^TA\bz,\bz^TB\bz)
= 2\text{tr}(A\Sigma B\Sigma)$
\item[(ii)]$\text{Cov}(\bz^TA\bz,(\bz^TB\bz)^2)
=
4\text{tr}(A\Sigma B\Sigma)\text{tr}(B\Sigma)
+8\text{tr}(A\Sigma B\Sigma B\Sigma).$
\end{itemize}
\end{lem}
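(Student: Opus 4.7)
The plan is to derive both identities from the moment/cumulant structure of quadratic forms in a centered Gaussian vector. A convenient first step is to reduce to the case of symmetric $A$: since $\bz^T A \bz = \bz^T A^T \bz$ as scalars, the quadratic form is unchanged by replacing $A$ with $\tilde A = (A + A^T)/2$, and $\tr(A\Sigma B\Sigma) = \tr(\tilde A \Sigma B \Sigma)$ follows from the cyclicity of trace together with the symmetry of $B$ and $\Sigma$. So I may assume throughout that $A$ is symmetric, which simplifies later index manipulations.

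For part (i), I apply Isserlis' theorem to the centered fourth moment, $E[z_i z_j z_k z_l] = \Sigma_{ij}\Sigma_{kl} + \Sigma_{ik}\Sigma_{jl} + \Sigma_{il}\Sigma_{jk}$. Expanding
\begin{align*}
\cov(\bz^T A \bz, \bz^T B \bz) = \sum_{i,j,k,l} A_{ij} B_{kl}\bigl(E[z_i z_j z_k z_l] - \Sigma_{ij}\Sigma_{kl}\bigr),
\end{align*}
the $\Sigma_{ij}\Sigma_{kl}$ pairing is cancelled by the subtracted mean term, while the two remaining crossing pairings each collapse to $\tr(A\Sigma B\Sigma)$ after using symmetry of $A, B, \Sigma$ and cyclicity of trace, giving the factor $2$.

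For part (ii), I use the general algebraic identity
\begin{align*}
\cov(X, Y^2) = E\bigl[(X-\mu_X)(Y-\mu_Y)^2\bigr] + 2\mu_Y \cov(X, Y),
\end{align*}
obtained by expanding $Y^2 = (Y-\mu_Y)^2 + 2\mu_Y(Y-\mu_Y) + \mu_Y^2$ and noting that the $\mu_X\V(Y)$ and $\mu_X \mu_Y^2$ pieces cancel against $E[X]E[Y^2]$. The central third moment on the right coincides with the joint cumulant $\kappa_3(X, Y, Y)$. For $X = \bz^T A \bz$ and $Y = \bz^T B \bz$, this cumulant is obtained by three derivatives at the origin of the joint cumulant generating function
\begin{align*}
K(t_1, t_2) = -\tfrac{1}{2}\log\det\bigl(I - 2(t_1 A + t_2 B)\Sigma\bigr),
\end{align*}
using $\partial_t M^{-1} = -M^{-1}(\partial_t M)M^{-1}$. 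A short calculation produces $\kappa_3(X,Y,Y) = 8\,\tr(A\Sigma B\Sigma B\Sigma)$ (two surviving cyclic orderings, each contributing $4$). Combining with $\mu_Y = \tr(B\Sigma)$ and the formula from (i) yields the stated identity.

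The main obstacle is bookkeeping rather than ideas. A direct Isserlis expansion for (ii) would require summing over the $15$ pairings of six indices; most of these produce products of traces that cancel against the $E[X]E[Y^2]$ subtraction, but tracking the cancellations by hand is laborious. Routing through the CGF bypasses this: the logarithm in $K$ automatically eliminates all disconnected contributions, so only the fully connected pairings---the cyclic words in $A\Sigma, B\Sigma, B\Sigma$---survive, and the symmetry reduction at the start keeps the trace identities from branching further.
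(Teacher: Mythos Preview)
Your argument is correct. The reduction to symmetric $A$ is legitimate for both identities (since $\tr(A^T\Sigma B\Sigma)=\tr(A\Sigma B\Sigma)$ and $\tr(A^T\Sigma B\Sigma B\Sigma)=\tr(A\Sigma B\Sigma B\Sigma)$ by transposing and using the symmetry of $B$ and $\Sigma$), the Isserlis computation for (i) is standard, and the CGF calculation for (ii) checks out: differentiating $K(t_1,t_2)=-\tfrac12\log\det\bigl(I-2(t_1A+t_2B)\Sigma\bigr)$ gives $\partial_{t_2}^2\partial_{t_1}K\big|_{0}=8\,\tr(A\Sigma B\Sigma B\Sigma)$, and your decomposition $\cov(X,Y^2)=\kappa_3(X,Y,Y)+2\mu_Y\cov(X,Y)$ then delivers the result.

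The paper takes a different and more elementary route for (ii). It first reduces to $\Sigma=I$ via the substitution $\uw=\Sigma^{-1/2}\bz$, then diagonalizes $B$ by an orthogonal $P$ (so $D=PBP^T$ is diagonal) and uses the rotational invariance $\bz\stackrel{d}{=}P\bz$ to replace $(A,B)$ by $(C,D)=(PAP^T,PBP^T)$. With $D$ diagonal the moments $E[(\bz^TD\bz)^2]$ and $E[\bz^TC\bz\,(\bz^TD\bz)^2]$ are computed by direct expansion, and subtraction yields $\cov(\bz^TA\bz,(\bz^TB\bz)^2)=8\,\tr(AB^2)+4\,\tr(AB)\tr(B)$ for $\Sigma=I$, after which the general-$\Sigma$ formula follows by undoing the standardization. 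Your CGF approach is more conceptual and sidesteps both the diagonalization and the explicit sixth-moment bookkeeping; the paper's approach is more hands-on and requires nothing beyond fourth moments of independent standard normals, but at the cost of the intermediate reductions.
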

\begin{notreport}
\begin{proof}[Proof of (i)]
See \citet*[pg. 51]{searle_1971}
\end{proof}
\begin{proof}[Proof of (ii)]
First, let $\sig = I_p.$ By the Spectral Decomposition Theorem, define $D := PBP^T$, where $P$ is orthogonal and $D$ is diagonal with eigenvalues $\lambda_i$. Define $C := PAP^T$. 
We know that $\bz^TB\bz = \bz^TP^TDP\bz$
and $\bz^TA\bz = \bz^TP^TCP\bz.$ Also, since 
$\bz\sim N_p(\bzero,I_p)$ and $\bz \stackrel{d}{=} P\bz$, 
$
\text{Cov}\left(\bz^TC\bz,(\bz^TD\bz)^2\right)
=
\text{Cov}\left(\bz^TA\bz,(\bz^TB\bz)^2\right).
$ Then by the above and algebra, we can show
$$
E\left[(\bz^TD\bz)^2\right] 
=2\text{tr}(B^2) + \text{tr}(B)^2$$ and 
$E\left[\bz^TC\bz(\bz^TD\bz)^2\right]
=
8\text{tr}(AB^2)+2\text{tr}(A)\text{tr}(B^2)
+4\text{tr}(AB)\text{tr}(B)+\text{tr}(A)\text{tr}(B)^2.$
Hence, \begin{align}
\label{covsq}
\text{Cov}\left(\bz^TA\bz,(\bz^TB\bz)^2\right)
=8\text{tr}(A B^2)+4\text{tr}(A B)\text{tr}(B). 
\end{align}

Now we assume general $\sig$ and let $\uw=\Sigma^{-1/2}\bz\sim N_p(\bzero,I_p).$ By (\ref{covsq}), we observe that
\begin{align*}
&\text{Cov}(\bz^TA\bz,(\bz^TB\bz)^2)\\
&\qquad= \text{Cov}(w^T\sig^{1/2}A\sig^{1/2}w, (w^T\sig^{1/2}B\sig^{1/2}w)^2) \\
&\qquad=
4\text{tr}(A\Sigma B\Sigma)\text{tr}(B\Sigma)
+8\text{tr}(A\Sigma B\Sigma B\Sigma).\qedhere
\end{align*}
\end{proof}
\end{notreport}
%
\begin{lem}
\label{sut_sq}
$E[(\sut -\su)^2] = 2(m-p)^{-2}\sum_{i=1}^m (\su + D_i)^2 + O(m^{-2}).$
\end{lem}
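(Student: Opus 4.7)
The plan is to write $\sut$ as a simple quadratic form in the Gaussian vector $\bu=\that-X\bb$, show that $\sut$ is exactly unbiased for $\su$, and then apply Lemma \ref{trab}(i) to reduce the calculation of $\mathrm{Var}(\sut)$ to a single trace that I will expand term-by-term.

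First I would note that $\hat{u}_i=\thhi-\bxi^T\bhat=\bei^T(I-P_X)\bu$, so
\[
\sum_{i=1}^m\hat{u}_i^2=\bu^T(I-P_X)\bu,
\]
where $\bu\sim N(\bzero,V)$ with $V=\mathrm{Diag}(\su+D_i)$. Since $\tr[(I-P_X)V]=\sum_i(\su+D_i)(1-h_{ii})=\su(m-p)+\sum_i D_i(1-h_{ii})$ (using $\tr(I-P_X)=m-p$), taking expectations of the definition of $\sut$ shows that $E[\sut]=\su$. Consequently $E[(\sut-\su)^2]=\mathrm{Var}(\sut)=(m-p)^{-2}\,\mathrm{Var}\!\left(\bu^T(I-P_X)\bu\right)$, and by Lemma \ref{trab}(i) applied with $A=B=I-P_X$,
\[
\mathrm{Var}\!\left(\bu^T(I-P_X)\bu\right)=2\,\tr\!\left[\bigl((I-P_X)V\bigr)^2\right].
\]

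Next I would expand the trace as
\[
\tr\!\left[\bigl((I-P_X)V\bigr)^2\right]=\tr(V^2)-2\,\tr(V^2 P_X)+\tr\!\left[(P_X V)^2\right],
\]
and identify $\tr(V^2)=\sum_{i=1}^m(\su+D_i)^2$ as the leading-order contribution that produces the $2(m-p)^{-2}\sum_i(\su+D_i)^2$ term in the lemma. Using that $V$ is diagonal one gets $\tr(V^2 P_X)=\sum_i(\su+D_i)^2 h_{ii}\le(\su+D_U)^2\sum_i h_{ii}=(\su+D_U)^2 p=O(1)$, and similarly $\tr[(P_XV)^2]=\sum_{i,j}h_{ij}^2(\su+D_i)(\su+D_j)\le(\su+D_U)^2\tr(P_X^2)=(\su+D_U)^2 p=O(1)$, where the bounds rely only on regularity condition (i) together with $\tr(P_X)=p$. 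Dividing by $(m-p)^2$ turns these two trace terms into $O(m^{-2})$ contributions, which yields the claimed expansion.

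The calculation is essentially routine; the only place one has to be a little careful is the expansion of $\tr[((I-P_X)V)^2]$ because $V$ and $P_X$ do not commute, so the cross terms must be separated as above rather than combined as $\tr(V^2(I-P_X)^2)$. Otherwise all the ingredients — Gaussianity of $\bu$, the covariance identity of Lemma \ref{trab}(i), and the uniform bounds on the $D_i$ — are already in hand.
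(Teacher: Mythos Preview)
Your proof is correct and follows essentially the same route as the paper: both reduce $E[(\sut-\su)^2]$ to $2(m-p)^{-2}\tr\{(I-P_X)V(I-P_X)V\}$ via the quadratic-form variance identity and then expand the trace. The only cosmetic difference is that the paper phrases the quadratic form in terms of $\util=\that-X\btil$ rather than $\bu=\that-X\bb$ (these coincide after multiplying by $I-P_X$), and your bounds on the two remainder traces are a bit more explicit, showing they require only condition~(i) and $\tr(P_X)=p$ rather than condition~(ii).
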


\begin{notreport}
\begin{proof}
Observe $m-p = \tr\{
I-P_X
\}$ and  define $d = \sum_iD_i(1-h_i) = \tr\{
(I-P_X)D
\},$ where $D = \text{Diag}\{D_i\}.$ Also,
recall $\util = \that - X\btil.$
Then
\begin{align*}
E[(\sut -\su)^2] &=
(m-p)^{-2}E \lb \left\{
\util^T(I-P_X)\util - \su(m-p) -d
\right\}^2 \rb\\
&= 
(m-p)^{-2}E \lb \left\{
\util^T(I-P_X)\util - 
\tr\{
(I-P_X)V
\}
\right\}^2 \rb\\
&=(m-p)^{-2}\lb
E \bigg[ \left\{
\util^T(I-P_X)\util
\right\}^2 \rb \\
&-2\tr\{
(I-P_X)V
\}E[\util^T(I-P_X)\util]\\
&+ \tr\{
(I-P_X)V
\}^2
\bigg]
=2(m-p)^{-2}\tr\{
(I-P_X)V(I-P_X)V
\}.
\end{align*}
 Using matrix manipulations, it is easy to show that 
\begin{align*}
E[(\sut -\su)^2] &= 2(m-p)^{-2}\sum_{i=1}^m (\su + D_i)\bigg[
(\su + D_i)\\ &\qquad+ \bxi^T(X^TX)^{-1}X^TVX(X^TX)^{-1}\bxi^T 
-2(\su + D_i)^2h_{ii}^V
\bigg]\\
&= 2(m-p)^{-2}\sum_{i=1}^m (\su + D_i)^2 + O(m^{-2}).\qedhere
\end{align*}
\end{proof}
\end{notreport}
\begin{theorem}
If regularity conditions (i)--(iii) hold, then
\label{thm_mse}
$E[(\theblupi - \thi)^2] = g_{1i}(\su) + g_{2i}(\su) + 
g_{3i}(\su) + g_{4}(\su) + o(m^{-1}),$
where 
\begin{align*}
g_{1i}(\su) &= B_i \su\\
g_{2i}(\su) &= B_i^2 h_{ii}^V\\
g_{3i}(\su) &= B_i^3 D_i^{-1} \V(\sut)\\
g_{4}(\su) &= \sum_{i=1}^m  w_i^2 B_i^2 V_i - \sum_{i=1}^m \sum_{j=1}^m w_i w_j B_i B_j h_{ij}^V,
\end{align*}
and where $\V(\sut) = 2(m-p)^{-2}\sum_{k=1}^m(\su + D_k)^2 + o(m^{-1}).$

\end{theorem}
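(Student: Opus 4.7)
The plan is to write $\theblupi - \thi = (\thebi - \thi) + C$, where $C := \thhw - \thbarweb$ is the benchmarking correction that does not depend on $i$. Squaring and taking expectation gives
\begin{align*}
E[(\theblupi - \thi)^2] = E[(\thebi - \thi)^2] + 2 E[(\thebi - \thi) C] + E[C^2],
\end{align*}
so I would attack the three pieces separately. The first piece is handled by invoking the Prasad--Rao expansion, yielding $g_{1i}(\su) + g_{2i}(\su) + g_{3i}(\su) + o(m^{-1})$.

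For the benchmarking-variance term $E[C^2]$, I would use the identity $\thhj - \thebj = \hat{B}_j (\thhj - \bxj^T \btil(\suh))$ to write $C = \sum_j w_j \hat{B}_j \tilde{u}_j(\suh)$, where $\tilde{u}_j(\su) = \thhj - \bxj^T \btil(\su)$, and then Taylor-expand in $\suh$ around $\su$. The zero-order piece $C_0 := \sum_j w_j B_j \tilde{u}_j(\su)$ can be computed exactly because $\util = (I - P_X^V) \that$ has covariance $V - X(X^T V^{-1} X)^{-1} X^T$, giving $\cov(\tilde u_j, \tilde u_k) = \delta_{jk} V_j - h_{jk}^V$, from which $E[C_0^2] = \sum_i w_i^2 B_i^2 V_i - \sum_{i,j} w_i w_j B_i B_j h_{ij}^V = g_4(\su)$ falls out immediately. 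The remainder $C - C_0$ will be controlled using Lemma~\ref{lem_comb1} (which bounds moments of $\partial \thbi / \partial \su$ and its second derivative), Lemma~\ref{lem_comb2} (which gives $\|\util\|^{2r} = O_p(m^r)$), Lemma~\ref{sut_sq}, and condition (iii) on the weights; a Cauchy--Schwarz argument then shows $|E[C^2] - E[C_0^2]| = o(m^{-1})$.

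For the cross term $2E[(\thebi - \thi) C]$, my key observation is that at the Bayes level (i.e.\ replacing $\thebi$ by $\thbi$ and $C$ by $C_0$) this expectation is identically zero. Indeed, $C_0 = \ba^T \bu$ where $\ba = (I - P_X^V)^T \mathrm{Diag}(B_j) \uw$ satisfies $X^T \ba = 0$, and writing $\thbi - \thi = -B_i u_i + (1-B_i) e_i + B_i \bxi^T (X^T V^{-1} X)^{-1} X^T V^{-1} \bu$, a direct covariance calculation gives $E[C_0(\thbi - \thi)] = [D_i - B_i V_i] a_i = 0$ since $B_i V_i = D_i$. The EB-level cross term then equals this vanishing Bayes term plus Taylor remainders involving $\suh - \su$, which I would bound by $o(m^{-1})$ using the lemmas above together with $E[(\suh - \su)^2] = O(m^{-1})$.

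The main obstacle is the routine but delicate bookkeeping in step two and three: verifying that every remainder term in the Taylor expansions, once combined with expectations of products of $\suh - \su$, $\util$, and derivatives of $\thbi$, really contributes only $o(m^{-1})$. This requires carefully pairing the moment bounds from Lemmas~\ref{lem_comb1}--\ref{sut_sq} via Cauchy--Schwarz and exploiting conditions (i)--(iii) to ensure that no term of order $m^{-1}$ is overlooked. Once these uniform bounds are in place, the stated second-order expansion follows by collecting the three contributions $g_{1i} + g_{2i} + g_{3i}$, the vanishing cross term, and $g_4$.
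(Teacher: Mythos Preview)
Your decomposition $\theblupi - \thi = (\thebi - \thi) + C$ is valid and leads to the same collection of terms as the paper, which instead splits first at the Bayes level via $\theblupi - \thi = (\thbi - \thi) + (\theblupi - \thbi)$ and uses the posterior-mean orthogonality $E[(\thbi - \thi)\,g(\that)] = 0$ to kill that cross term exactly. (The same orthogonality, incidentally, gives your Bayes-level vanishing $E[(\thbi - \thi)\,C_0] = 0$ in one line, since $C_0$ is a function of $\that$; your explicit covariance computation is unnecessary.) Your identification $E[C_0^2] = g_4(\su)$ is correct and matches the paper.

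The gap is in the remainder analysis: two terms are only $O(m^{-1})$ by Cauchy--Schwarz together with the moment bounds of Lemmas~\ref{lem_comb1}, \ref{lem_comb2}, \ref{sut_sq}, and must be pushed down to $o(m^{-1})$ by explicit quadratic-form covariance computations using Lemma~\ref{trab}. First, the cross piece $E[(\thebi - \thbi)\,C_0]$ has leading term $E\bigl[\tfrac{\partial \thbi}{\partial \su}(\sut - \su)\,C_0\bigr]$, a product of a linear form, a centered quadratic form, and another linear form in $\util$; Cauchy--Schwarz gives $O(m^{-1/2})\cdot O(m^{-1/2}) = O(m^{-1})$, but Lemma~\ref{trab}(i) reduces it to a trace that is $o(m^{-1})$ after exploiting $\bej^T V\bei = 0$ for $i\neq j$ (the paper's $R_4$ step). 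Second, $E[(C-C_0)^2] = E[(\thbarweb - \thbarwb)^2]$ expands into diagonal terms (harmless) plus off-diagonal terms $E[(\thebi - \thbi)(\thebj - \thbj)]$, $i\neq j$, whose dominant part $E\bigl[\tfrac{\partial \thbi}{\partial \su}\tfrac{\partial \thbj}{\partial \su}(\sut - \su)^2\bigr]$ requires Lemma~\ref{trab}(ii) (the paper's $R_0$ step). Without these exact-cancellation arguments your Cauchy--Schwarz bound stalls at $O(m^{-1})$, and the claimed $o(m^{-1})$ for both $E[C^2]-g_4$ and the EB cross term does not follow.
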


\begin{rem}
\label{rem1}
We note that the the MSE of the benchmarked EB estimator in Theorem 1 is always non-negative. It is clear that $g_{1i}(\su),$ $g_{2i}(\su),$ and $g_{3i}(\su)$ are non-negative. To establish the non-negativity of $g_4(\su),$ let $\boldsymbol{q} = (\boldsymbol{q}_1,\ldots,\boldsymbol{q}_m),$ where $q_i = w_iB_iV_i^{1/2}.$ We can write 
$g_4(\su) = \boldsymbol{q}^T(I-\tilde{P}^T_X)\boldsymbol{q},$ where
$\tilde{P}^T_X = V^{-1/2}X(X^TV^{-1}X)^{-1}X^{T}V^{-1/2}.$ Thus, $g_4(\su) \geq 0,$ and hence, the MSE in Theorem 1 is always non-negative.
\end{rem}

\setcounter{chapter}{4}
\setcounter{equation}{0}
\noindent {\bf 4. Estimator of MSE Approximation}
\label{s_est}

We now obtain an estimator of the MSE approximation for the Fay-Herriot model (assuming normality). Theorem \ref{thm_est} shows that the expectation of the MSE estimator is correct up to $O(m^{-1}).$
\begin{lem}
\label{lit_lem}
Suppose that 
\begin{align}
\label{asm}
\sup_{t \in T} |h^\prime(t)| = O(m^{-1})
\end{align}
for some interval $T \subseteq \mathbb{R}$. If $\suh, \su \in T$ w.p. $1,$ then
$E[h(\suh)] = h(\su) + o(m^{-1}).$ 
\begin{proof}
Consider the expansion 
$h(\suh) = h(\su) + h^\prime (\sus)(\suh - \su)$ for some $\sus$  between $\su$ and $\suh.$  Then $\sus\in T\, \text{ a.s.},$
and
$h^\prime (\sus) \leq \sup_{t \in T} |h^\prime (t)|\,\text{ a.s.}$ as well.
This implies 
$E[h^\prime (\sus)(\suh - \su)] \leq 
\sup_{t \in T} |h^\prime (t)| E|\suh - \su| = O(m^{-3/2})$
by equation (\ref{asm}) and since $E|\suh - \su| \leq E^{\frac{1}{2}}[(\suh - \su)^2].$
Hence, if (\ref{asm}) holds, then $E[h(\suh)] = h(\su) + o(m^{-1}).$
\end{proof}
\end{lem}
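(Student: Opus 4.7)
The plan is to use a first-order Taylor expansion of $h$ around $\su$ and then bound the expected remainder using the uniform derivative bound together with a moment control on $\suh - \su$. Concretely, I would first invoke the mean value theorem to write $h(\suh) = h(\su) + h^\prime(\sus)(\suh - \su)$ for some (random) $\sus$ lying between $\su$ and $\suh$. Taking expectations gives $E[h(\suh)] = h(\su) + E[h^\prime(\sus)(\suh - \su)]$, so the whole task reduces to showing that the remainder term is $o(m^{-1})$.

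Next I would argue that, since $\suh, \su \in T$ a.s. and $T$ is an interval, the intermediate point $\sus$ also lies in $T$ a.s. Therefore $|h^\prime(\sus)| \leq \sup_{t \in T} |h^\prime(t)|$ almost surely, and pulling this deterministic bound outside the expectation yields
\begin{equation*}
|E[h^\prime(\sus)(\suh - \su)]| \;\leq\; \Bigl(\sup_{t \in T} |h^\prime(t)|\Bigr)\, E|\suh - \su|.
\end{equation*}
By assumption (\ref{asm}) the first factor is $O(m^{-1})$, so it remains only to establish that $E|\suh - \su| = O(m^{-1/2})$.

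For the second-moment input I would invoke Lemma \ref{sut_sq}, which gives $E[(\sut - \su)^2] = O(m^{-1})$. Since $\su \geq 0$ and $\suh = \max\{0, \sut\}$, the truncation at zero can only move $\sut$ closer to $\su$, so $|\suh - \su| \leq |\sut - \su|$ a.s. and hence $E[(\suh - \su)^2] = O(m^{-1})$ as well. Jensen's inequality then gives $E|\suh - \su| \leq \{E[(\suh-\su)^2]\}^{1/2} = O(m^{-1/2})$. Combining, the remainder is $O(m^{-1}) \cdot O(m^{-1/2}) = O(m^{-3/2}) = o(m^{-1})$, which completes the proof.

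The whole argument is essentially a one-term Taylor expansion plus Cauchy--Schwarz, and there is no real obstacle; the only subtle point is verifying that one may pull $\sup_{t\in T}|h^\prime(t)|$ out of the expectation, which is handled by the a.s.\ containment $\sus \in T$. The rate saving of $m^{-1/2}$ beyond the hypothesized $O(m^{-1})$ bound on $h^\prime$ is what upgrades the remainder from $O(m^{-1})$ to $o(m^{-1})$, and that improvement comes entirely from the $\sqrt{m}$-consistency of the moment estimator $\suh$.
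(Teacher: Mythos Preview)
Your proposal is correct and follows essentially the same route as the paper: a first-order Taylor (mean value) expansion, the almost-sure containment $\sus\in T$ to pull out $\sup_{t\in T}|h'(t)|$, and then Cauchy--Schwarz/Jensen to bound $E|\suh-\su|$ by $E^{1/2}[(\suh-\su)^2]=O(m^{-1/2})$. The only difference is that you spell out the justification for $E[(\suh-\su)^2]=O(m^{-1})$ via Lemma~\ref{sut_sq} and the truncation inequality $|\suh-\su|\le|\sut-\su|$, whereas the paper leaves this step implicit.
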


\begin{theorem}
\label{thm_est}
$E[g_{1i}(\suh) + g_{2i}(\suh) + 2g_{3i}(\suh) + g_{4}(\suh)]
= g_{1i}(\su) + g_{2i}(\su) + g_{3i}(\su) +g_{4}(\su) + o(m^{-1}),$
where 
$g_{1i}(\su), g_{2i}(\su),  g_{3i}(\su),$ and $g_{4}(\su)$ are defined in Theorem 1.
\end{theorem}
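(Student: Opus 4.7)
The strategy is to analyze the four terms on the left-hand side separately. Three of them---$g_{2i}, g_{3i}, g_4$---are already $O(m^{-1})$ smooth functions of $\su$, so Lemma~\ref{lit_lem} applies directly. The only non-trivial term is $g_{1i}(\suh) = B_i\suh$, which is $O(1)$; its second-order Taylor bias produces a spurious $-g_{3i}(\su)$ that the factor $2$ in front of $g_{3i}(\suh)$ is precisely designed to cancel.

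For the easy terms I would verify that $\sup_{t\in T}|g_{2i}'(t)|$, $\sup_{t\in T}|g_{3i}'(t)|$, and $\sup_{t\in T}|g_4'(t)|$ are all $O(m^{-1})$ on a suitable interval $T$ containing both $\su$ and $\suh$ almost surely. Under regularity conditions (i)--(iii) this follows by direct differentiation: $g_{2i}(\su) = B_i^2 h_{ii}^V$ and $g_4(\su)$ are $O(m^{-1})$ by the leverage and weight bounds, and $g_{3i}(\su)$ is $O(m^{-1})$ by Lemma~\ref{sut_sq}. Lemma~\ref{lit_lem} then gives $E[g_{ki}(\suh)] = g_{ki}(\su) + o(m^{-1})$ for $k=2,3$ and $E[g_4(\suh)] = g_4(\su) + o(m^{-1})$.

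For $g_{1i}$, since $g_{1i}(x) = D_i x/(x+D_i)$ is smooth with $g_{1i}'(\su) = B_i^2$, $g_{1i}''(\su) = -2B_i^3/D_i$, and bounded third derivative on $[0,\infty)$, a second-order Taylor expansion around $\su$ gives $g_{1i}(\suh) = g_{1i}(\su) + B_i^2(\suh - \su) - B_i^3 D_i^{-1}(\suh - \su)^2 + R_i$ with $|R_i|\leq C|\suh-\su|^3$. Taking expectations: (a) since $E[\sut] = \su$ and $\suh - \sut = -\sut\,1\{\sut<0\}$ has exponentially small mean when $\su>0$, the linear term contributes $B_i^2 E[\suh-\su] = o(m^{-1})$; (b) by Lemma~\ref{sut_sq}, $E[(\suh-\su)^2] = \V(\sut) + o(m^{-1})$, so the quadratic term contributes $-B_i^3 D_i^{-1}\V(\sut) + o(m^{-1}) = -g_{3i}(\su) + o(m^{-1})$; (c) $E|R_i| \leq C\,E|\suh-\su|^3 = O(m^{-3/2}) = o(m^{-1})$ by H\"older with a fourth-moment bound on $\sut-\su$. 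Hence $E[g_{1i}(\suh)] = g_{1i}(\su) - g_{3i}(\su) + o(m^{-1})$, and combining with $E[2g_{3i}(\suh)] = 2g_{3i}(\su) + o(m^{-1})$ yields $E[g_{1i}(\suh) + 2g_{3i}(\suh)] = g_{1i}(\su) + g_{3i}(\su) + o(m^{-1})$. Adding the identities for $g_{2i}$ and $g_4$ gives the theorem.

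The main obstacle is the non-smoothness of $\suh$ at the truncation point: the clean Taylor argument applies cleanly to $\sut$, whereas $\suh = \max(0,\sut)$ is only piecewise smooth and can introduce a bias that is not immediately $o(m^{-1})$. The standard remedy is to split each expectation on $\{\sut\geq 0\}$ and its complement and use exponential tail bounds for the Gaussian quadratic form defining $\sut$ to discard the truncation event at any polynomial order; the same bound yields the needed estimate $E|\suh-\su|^3 = o(m^{-1})$ for the remainder in (c).
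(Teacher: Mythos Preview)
Your proposal is correct and follows essentially the same route as the paper. The paper is more concise because it invokes Theorem~A.3 of \citet*{prasad_1990} directly for $E[g_{1i}(\suh)+g_{2i}(\suh)+2g_{3i}(\suh)]=g_{1i}(\su)+g_{2i}(\su)+g_{3i}(\su)+o(m^{-1})$, whereas you explicitly carry out the second-order Taylor expansion of $g_{1i}$ (and the Lemma~\ref{lit_lem} argument for $g_{2i},g_{3i}$) that underlies that citation; for $g_4$ both arguments apply Lemma~\ref{lit_lem} after checking $\sup_{\su\geq 0}|g_4'(\su)|=O(m^{-1})$.
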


\begin{proof}
By Theorem A.3 in \citet*{prasad_1990}, $E[g_{1i}(\suh) + g_{2i}(\suh) + 2g_{3i}(\suh)] =g_{1i}(\su) + g_{2i}(\su) + g_{3i}(\su) + o(m^{-1}).$
In addition, we consider $E[g_{4}(\suh)],$ where
$g_{4}(\su) = \sum_{i=1}^m  w_i^2 B_i^2 V_i - \sum_{i=1}^m \sum_{j=1}^m w_i w_j B_i B_j h_{ij}^V=: g_{41}(\su) + g_{42}(\su) .$ 
We first show that the derivatives of $g_{41}(\su)$ and $g_{42}(\su)$ satisfy (\ref{asm}). Let $T = [0,\infty).$ Consider
\begin{align*} 
\sup_{\su \geq 0} \left|\frac{\partial g_{41}(\su)}{\partial \su} \right|
&= \sup_{\su \geq 0}\sum_{i=1}^m w_i^2 B_i^2= O(m^{-1}).
\end{align*}

It can be shown that
$\dfrac{\partial B_iB_j}{\partial \su} =
-B_iB_j^2D_j^{-1} - B_i^2B_jD_i^{-1}$
and $(X^TV^{-1}X)^{-1} \leq (X^TV^{-2}X)^{-1} D_L^{-1}. $ Observe that
\begin{align*} 
\left|\frac{\partial g_{42}(\su)}{\partial \su} \right|
&\leq \sum_{i=1}^m \sum_{j=1}^m w_i w_j
\bigg[|B_iD_L^{-1}h_{ij}^V| + |B_jD_L^{-1}h_{ij}^V|\\
&+ 
B_iB_j\bxi^T(X^TV^{-1}X)^{-1}X^TV^{-2}X(X^TV^{-1}X)^{-1}\bxi
\bigg]\\
&\leq 3m^2 (\ma w_i)^2D_L^{-1} B_i(\su + D_U)(\ma h_i)\\
&\leq 3m^2 (\ma w_i)^2 D_L^{-1}D_U (\su + D_L)^{-1}(\su + D_U)(\ma h_i)\\
&= 3m^2 (\ma w_i)^2 D_L^{-1}D_U (1 + D_UD_L^{-1})(\ma h_i)
 = O(m^{-1}).
\end{align*}
This implies that $\displaystyle \sup_{\su \geq 0} \left|\frac{\partial g_{42}(\su)}{\partial \su} \right| = O(m^{-1}).$ Since the derivatives of $g_{41}(\su)$ and $g_{42}(\su)$ satisfy (\ref{asm}), we know that $E[g_{4}(\suh)] = g_{4}(\su) + o(m^{-1}).$\qedhere

\end{proof}


\setcounter{chapter}{5}
\setcounter{equation}{0}
\noindent {\bf 5. Parametric Bootstrap Estimator of the MSE of the Benchmarked Empirical  Bayes Estimator}

In this section, we extend the methods of \citet*{butar_2003} to find a parametric bootstrap estimator of the MSE of the benchmarked EB estimator. Under the proposed model, the expectation of the proposed measure of uncertainty of the benchmarked EB estimator is correct up to order $O(m^{-1}).$ 

To introduce the parametric bootstrap method, consider the model
\begin{align}
\label{boot}
\hat{\theta_i}^*|u_i^* &\stackrel{ind}{\sim} N(\bxi^T\btil + u_i^*,D_i)\notag \\
u_i^*  &\stackrel{ind}{\sim} N(0,\suh).
\end{align}
Following \citet*{butar_2003}, we use the parametric bootstrap twice. We first use it to estimate $g_{1i}(\su), 
g_{2i}(\su),$ and $g_4(\su)$ by correcting the bias of 
$g_{1i}(\suh), g_{2i}(\suh),$ and $g_4(\suh).$ We then use it again to estimate $E[(\thebi - \thbi)^2] = g_{3i}(\su) + o(m^{-1}).$

\citet*{butar_2003} derived a parametric bootstrap estimator for the MSE of the EB estimator under the \citet*{fay_1979} model. Using Theorem A.1 of their paper, they show that the bootstrap estimator $V_i^{\text{BOOT}}$ is
\begin{align}
\label{vboot}
V_i^{\text{BOOT}} &= 2[g_{1i}(\suh) + g_{2i}(\suh)]
- E_*\lb
g_{1i}(\sustt) + g_{2i}(\sustt)
\rb + E_*[(\thebs - \thebi) ^2], 
\end{align}
where $E_*$ denotes the expectation computed with respect to the model given in~(\ref{boot}), and
$\thebs = (1-B_i(\sustt))\thhi + B_i(\sustt)\bxi^T\bhat.$
Following their work, we propose a parametric bootstrap estimator of the MSE of the benchmarked EB estimator that is a simple extension of (\ref{vboot}). 

We propose to estimate $g_{1i}(\su) + g_{2i}(\su) + g_{4}(\su)$
by 
$$2[g_{1i}(\suh) + g_{2i}(\suh) + g_{4}(\suh)]
- E_*\lb
g_{1i}(\sustt) + g_{2i}(\sustt) + g_{4}(\sustt)
\rb$$ and then to estimate $E[(\thebi - \thbi)^2]$ by 
$E_*[(\thebs - \thebi)^2 ].$
Thus, our proposed estimator of $\text{MSE}[\hat{\theta}_i^{\text{EBM1}}]$ is 
\begin{align*}
V_i^{\text{B-BOOT}}
&= 2[g_{1i}(\suh) + g_{2i}(\suh) + g_4(\suh)]
- E_*\lb
g_{1i}(\sustt) + g_{2i}(\sustt) + g_{4}(\sustt)
\rb \\
&+ E_*[(\thebs - \thebi)^2 ]. \notag
\end{align*}
%
%
\begin{theorem}
$E[V_i^{\text{B-BOOT}}] = \text{MSE}[\hat{\theta}_i^{\text{EBM1}}] + o(m^{-1}).$
\end{theorem}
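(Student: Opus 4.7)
The strategy is to decompose
\[
V_i^{\text{B-BOOT}} = A_i + B_i,
\]
where
\[
A_i = 2\bigl[g_{1i}(\suh)+g_{2i}(\suh)+g_{4}(\suh)\bigr] - E_*\bigl[g_{1i}(\sustt)+g_{2i}(\sustt)+g_{4}(\sustt)\bigr]
\]
and $B_i = E_*[(\thebs - \thebi)^2]$, and to verify separately that $E[A_i] = g_{1i}(\su)+g_{2i}(\su)+g_{4}(\su)+o(m^{-1})$ and $E[B_i] = g_{3i}(\su)+o(m^{-1})$. Summing and invoking Theorem~\ref{thm_mse} will then give the conclusion.

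For $A_i$, I would first recast Theorem~\ref{thm_est} in the equivalent form
\[
E[g_{1i}(\suh)+g_{2i}(\suh)+g_{4}(\suh)] = g_{1i}(\su)+g_{2i}(\su)+g_{4}(\su) - g_{3i}(\su) + o(m^{-1}),
\]
using that $E[g_{3i}(\suh)] = g_{3i}(\su) + o(m^{-1})$ by Lemma~\ref{lit_lem}, since $g_{3i}$ is $O(m^{-1})$ with uniformly $O(m^{-1})$-bounded derivative on $[0,\infty)$. The bootstrap model (\ref{boot}) has exactly the same Fay-Herriot structure, with $(\suh, \btil)$ playing the role of $(\su,\bb)$, so the identical argument applied conditionally under the bootstrap distribution yields
\[
E_*\bigl[g_{1i}(\sustt)+g_{2i}(\sustt)+g_{4}(\sustt)\bigr] = g_{1i}(\suh)+g_{2i}(\suh)+g_{4}(\suh) - g_{3i}(\suh) + o_P(m^{-1}).
\]
Taking the outer expectation of this identity, applying the previous display (with $\suh$ in place of $\su$) together with another invocation of Lemma~\ref{lit_lem}, and subtracting from $2E[g_{1i}(\suh)+g_{2i}(\suh)+g_{4}(\suh)]$, the two $-2g_{3i}(\su)$ contributions cancel and leave $E[A_i] = g_{1i}(\su)+g_{2i}(\su)+g_{4}(\su) + o(m^{-1})$.

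For $B_i$, the key observation is that in the bootstrap universe $\thebi$ plays the role of the Bayes estimator (it is built from the ``true'' bootstrap parameters $\suh$ and $\btil$), while $\thebs$ plays the role of the EB estimator that re-estimates the variance as $\sustt$. Applying the Prasad--Rao expansion inside this bootstrap model---the same expansion that produces the $g_{3i}$ term in Theorem~\ref{thm_mse}---gives
\[
E_*[(\thebs - \thebi)^2] = g_{3i}(\suh) + o_P(m^{-1}),
\]
and then $E[B_i] = g_{3i}(\su) + o(m^{-1})$ by Lemma~\ref{lit_lem}. Combining $E[A_i]$ and $E[B_i]$ produces $g_{1i}(\su)+g_{2i}(\su)+g_{3i}(\su)+g_{4}(\su) + o(m^{-1})$, which by Theorem~\ref{thm_mse} equals $\text{MSE}[\hat{\theta}_i^{\text{EBM1}}] + o(m^{-1})$. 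The main obstacle is the uniform integrability needed to convert the $o_P(m^{-1})$ bootstrap remainders into $o(m^{-1})$ after the outer expectation: one must check that the moment bounds in Lemmas~\ref{lem_comb1} and \ref{lem_comb2} extend to the bootstrap model uniformly in $\suh$, which is delicate because the truncation $\suh = \max\{0,\sut\}$ can place $\suh$ at the boundary, and because the regularity conditions (i)--(iii) must continue to hold under the bootstrap distribution so that Theorem~\ref{thm_est} and the Prasad--Rao expansion can be applied conditionally.
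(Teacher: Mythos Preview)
Your overall decomposition $V_i^{\text{B-BOOT}}=A_i+B_i$ and the treatment of $A_i$ match the paper's proof almost exactly: the paper cites Butar and Lahiri for $E_*[g_{1i}(\sustt)]=g_{1i}(\suh)-g_{3i}(\suh)+o_p(m^{-1})$ and $E_*[g_{2i}(\sustt)]=g_{2i}(\suh)+o_p(m^{-1})$, adds $E_*[g_4(\sustt)]=g_4(\suh)+o_p(m^{-1})$, and then takes the outer expectation using Theorem~\ref{thm_est} and the Prasad--Rao results, just as you do.

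The gap is in your handling of $B_i$. Your claim $E_*[(\thebs-\thebi)^2]=g_{3i}(\suh)+o_P(m^{-1})$ does not follow from a direct transplant of the Prasad--Rao expansion into the bootstrap model. In the Butar--Lahiri bootstrap used here, only $\sustt$ is regenerated from the bootstrap sample; both $\thebs$ and $\thebi$ are built from the \emph{original} $\thhi$, which is held fixed under $E_*$. Consequently the leading Taylor term $(\partial\thbi/\partial\su)^2$ is a fixed (data-dependent) number under $E_*$, not averaged out, and one obtains
\[
E_*[(\thebs-\thebi)^2]=g_{5i}(\suh)+o_p(m^{-1}),\qquad g_{5i}(\suh)=[B_i(\suh)]^4D_i^{-2}\bigl(\thhi-\bxi^T\btil(\suh)\bigr)^2\cdot\text{(bootstrap variance factor)},
\]
a quantity that is random in $\thhi$ beyond its dependence on $\suh$. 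Since $(\thhi-\bxi^T\btil)^2-V_i$ is $O_P(1)$, the difference $g_{5i}(\suh)-g_{3i}(\suh)$ is genuinely $O_P(m^{-1})$, not $o_P(m^{-1})$, so your displayed identity is false as stated. The paper closes this by invoking the separate Butar--Lahiri fact $E[g_{5i}(\suh)]=g_{3i}(\su)+o(m^{-1})$; with that extra step your conclusion $E[B_i]=g_{3i}(\su)+o(m^{-1})$ is recovered and the rest of your argument goes through. Your caveat about uniform integrability is well placed; the paper handles it by appeal to the moment bounds already established in Prasad--Rao and Butar--Lahiri rather than by a new argument.
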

\begin{proof}
First, by Theorem A.1 in \cite{butar_2003}, we note that
\begin{align*}
E_*[g_{1i}(\sustt)] &= g_{1i}(\suh) -g_{3i}(\suh) + o_p(m^{-1}),\\
E_*[g_{2i}(\sustt)] &= g_{2i}(\suh) +
o_p(m^{-1}), \;\; \text{and}\\
E_*[(\thebs - \thebi)^2 ] &= g_{5i}(\suh) +
o_p(m^{-1}),
\end{align*}
where 
$g_{5i}(\suh) = [B_i(\suh)]^4D_i^{-2}
\lp\thhi - \bxi^T\btil(\suh)\rp^2.$
Also, $E_*[g_4(\sustt)] = g_4(\suh) + o_p(m^{-1}),$ which follows along the lines of the proof of Theorem~A.2(b) of \citet*{datta_2000}. Applying these results and our Theorem 2, we find 
$$V_i^\text{B-BOOT} = g_{1i}(\suh) + g_{2i}(\suh) + g_{3i}(\suh) + g_4(\suh)   + g_{5i}(\suh) + o_p(m^{-1}).$$
This implies that 
$$E[V_i^\text{B-BOOT}] = g_{1i}(\su) + g_{2i}(\su) + g_{3i}(\su) + g_4(\su) + o(m^{-1})$$ since $E[g_{5i}(\suh)] =  g_{3i}(\su) + o(m^{-1})$ by \cite{butar_2003}, and by applying the results of \citet{prasad_1990}.
\end{proof}


\setcounter{chapter}{6}
\setcounter{equation}{0}
\noindent {\bf 6. Two Applications}

In this section, we consider a data set and report on a simulation study in order to compare the performance of the estimator of the MSE of the benchmarked EB estimator and the parametric bootstrap estimator of the MSE of the benchmarked EB estimator. Tables and figures that result from this can be found in Appendix A.

We consider data from the Small Area Income and Poverty Estimates (SAIPE) program at the U.S. Census Bureau, which produces model-based estimates of the number of poor school-aged children (5--17 years old) at the national, state, county, and district levels. The school district estimates are benchmarked 
to the state estimates by the Department of Education to allocate funds under the No Child Left Behind Act of 2001. Specifically, we consider year 1997.
In the SAIPE program, the model-based state estimates are benchmarked to the national school-aged poverty rate using the benchmarked estimator in (\ref{fayest}). The number of poor school-aged children has been collected from the Annual Social and Economic Supplement (ASEC) of the Current Population Survey (CPS) from 1995 to 2004, while the American Community Survey (ACS) estimates have been used since 2005. Additionally, the model-based county estimates are benchmarked to the model-based state estimates using the 
the benchmarked estimator in (\ref{fayest}).

In the SAIPE program, the state model for poverty rates in school-aged children follows the basic Fay-Herriot (1979) framework where
$\thhi = \theta_i + e_i$ and $\theta_i = \bxi^T\bb + u_i.$ Here
$\theta_i$ is the true state level poverty rate, ö 
$\thhi$ is the direct survey estimate (from 
CPS ASEC), $e_i$ is the sampling error term with assumed known variance $D_i >0$, $\bxi$ are 
the predictors, $\bb$ is the unknown vector of regression coefficients, and $u_i$ is the model error with unknown variance $\su.$ 
The explanatory variables in the model are the IRS income 
tax--based pseudo-estimate of the child poverty rate, IRS non-filer rate, food stamp 
rate, and the residual term from the regression of the 1990 Census estimated child 
poverty rate. We estimate $\bb$ using the weighted least squares type estimator $\btil(\suh) = 
\xg V^{-1}\that,$ and we estimate $\su$ using the modified moment estimator $\suh$ from Section 2. 

As shown in Table~A.1, the estimated MSE of the EB estimator, $\text{mse}(\thebi)$,  compared to the estimated MSE of the benchmarked EB estimator, $\text{mse}(\hat{\theta}_i^{EBM1}),$ differs by the constant $g_{4}(\su),$  $0.025.$ This constant is effectively the increase in MSE that we suffer from benchmarking, and we see that in this case it is small (compared to the values of the MSEs). Generally speaking, it is expected to be small since $g_{4}(\su) = O(m^{-1}).$ 


In Table~A.1, we write $\text{mse}^B$ and $\text{mse}^{BB}$ as the bootstrap estimates of the MSE of the EB estimator and the benchmarked EB estimator, respectively.
As mentioned, we consider year 1997 for illustrative purposes.
%
When we performed the bootstrapping, we resampled $\sust$ $10,000$ times in order to calculate $\text{mse}^B$ and $\text{mse}^{BB}.$ 
%
This is best understood through the concept behind our bootstrapping approach. Consider the behavior of $g_{1i}(\su)$, the only term that is $O(1).$ Ordinarily, $g_{1i}(\suh)$ underestimates 
$g_{1i}(\su)$, and $E_*[g_{1i}(\suh)]$ underestimates 
$g_{1i}(\suh).$ The basic idea is that we use the amount by which $E_*[g_{1i}(\suh)]$ underestimates $g_{1i}(\suh)$ as an approximation of the amount by which $g_{1i}(\suh)$ underestimates $g_{1i}(\su).$

We run into a problem with the 1997 data, where $g_{1i}(\suh)$ is 0, since in this case $E_*[g_{1i}(\suh)]$ overestimates 
$g_{1i}(\suh).$ Recall that 
$$V_i^{\text{B-BOOT}} = g_{1i}(\suh) + \{
g_{1i}(\suh) - E_*[g_{1i}(\hat{\sigma}_u^{*2})]
\} + O(m^{-1}).$$
Since $g_{1i}(\suh)$ is 0 and is the dominating term of 
$V_i^{\text{B-BOOT}},$ many of the estimated MSEs of the benchmarked bootstrapped estimator ($\text{mse}^{\text{BB}}$) are negative. Also, observe this same behavior holds true for the bootstrapped estimator proposed by \cite{butar_2003}, which we denote by $\text{mse}^{\text{B}}.$ Hence, we do not recommend using bootstrapping when $\suh$ is too close to zero because of the form of $\suh$. We also note that the MSE of the benchmarked EB estimator is always non-negative as explained in Remark 1 of Section 3.

In the second example, we ran a simulation study, using the
same covariates from the SAIPE dataset from 1997. We generated our data from the model
\begin{align}
\label{p0}
\hat{\theta}_i|\theta_i &\stackrel{ind}{\sim} N(\theta_i, D_i)\\
\theta_i &\stackrel{ind}{\sim} N(X^T\bb, \su),\notag
\end{align}
where $D_i$ comes from the SAIPE dataset. 
We first simulated 10,000 sets of values for $\theta_i$ and $\hat{\theta}_i$ using (\ref{p0}). We then used each set of $\hat{\theta}_i$ values as the data and computed the EB and benchmarked EB estimators according to  (2.3) and the EB formula given below it. 
In order to use EB, we took $\bb = (-3,0.5,1,1,0.5)^T$ and $\su = 5.$

In Figure~\ref{fig:sims}, we compare the estimator of the theoretical MSE of the benchmarked EB estimator and the 
bootstrap estimator of the MSE of the benchmarked EB estimator
with the true value, i.e., the average of the squared difference between the estimator values and the true $\theta_i,$ generated according to model (\ref{p0}). In the upper plot, 
we see that the estimator of the theoretical MSE of the benchmarked EB estimator overshoots the truth very slightly, which shows that our estimator is slightly conservative. We find the opposite behavior to be true of the bootstrap estimator of the MSE of the benchmarked Bayes estimator, meaning that it undershoots the truth slightly.

In practice, it seems safer to use a MSE estimator that overestimates than one that underestimates, and hence, we recommend our proposed MSE estimator over the bootstrapped MSE estimator.
%
Using the lower plot, we compared the theoretical Prasad Rao (PR) MSE estimator with the associated true value. 
We find the same behavior in the PR estimator as we did in our proposed theoretical MSE of the benchmarked EB estimator. The overshoot occurs in the terms that the estimators have in common, i.e., $g_{1i}(\su); g_{2i}(\su);$ and $
g_{3i}(\su).$ We see that for this particular simulation study where $m$ is particularly large at 10,000, the difference between the two MSEs is indistinguishable. 

\setcounter{chapter}{7}
\setcounter{equation}{0}
\noindent {\bf 7. Summary and Conclusion}

We have shown that the increase in MSE due to benchmarking under our modeling assumptions is quite small for the Fay-Herriot model, specifically $O(m^{-1}).$ We have derived an asymptotically unbiased estimate of the MSE of the benchmarked EB estimator (EBLUP) under the same assumptions which is correct to order $O(m^{-1}).$ We have derived a parametric bootstrap estimator of the benchmarked EB estimator based on work done by \citet*{butar_2003}.
Furthermore, we have illustrated our methodology for a data set for fixed $m$ using U.S. Census data. Since our theoretical estimator of the MSE under benchmarking is guaranteed to be positive, we recommend it over the one derived by bootstrapping. We also performed a simulation study that suggests use of the theoretical estimator of the MSE under benchmarking. 
%
In closing, it is important to pursue further work for more complex models, and, in particular, when it is necessary to achieve multi-stage benchmarking. \\


\noindent {\large\bf Acknowledgment}  \par
The research was partially supported by NSF Grant SES 1026165 and the United States Census Bureau Dissertation Fellowship Program. The views expressed here are those of the authors and do not reflect those of the U.S. Census Bureau. We would like to express our thanks to an associate editor, the referees, and Professor J.N.K. Rao for their helpful suggestions. 

\newpage
\renewcommand{\theequation}{A.\arabic{equation}}
\renewcommand{\thefigure}{A.\arabic{figure}}
\renewcommand{\thetable}{A.\arabic{table}}
\section*{Appendix A}\begin{center}
\begin{longtable}{ccccccccc}
\caption{Table of estimates for 1997}\\
\hline \hline
$i$	&	$\thhi$	&	$\thebi$	&$\theblupi$	&	$\text{mse}(\thhi)$	&	$\text{mse}(\thebi)$	&	$\text{mse}(\theblupi)$	&	$\text{mse}^B$	&	$\text{mse}^{BB}$\\\hline \hline
\endfirsthead
\caption{Table of estimates for 1997 (continued)} \\
\hline \hline
$i$	&	$\thhi$	&	$\thebi$	&$\theblupi$	&	$\text{mse}(\thhi)$	&	$\text{mse}(\thebi)$	&	$\text{mse}(\theblupi)$	&	$\text{mse}^B$	&	$\text{mse}^{BB}$\\
\hline \hline
\endhead
\hline\hline \multicolumn{8}{l}{}\endfoot 
1	&	25.16	&	21.38	&	21.56	&	15.72	&	1.38	&	1.41	&	0.02	&	0.04	\\
2	&	10.99	&	14.94	&	15.11	&	10.44	&	2.12	&	2.14	&	0.66	&	0.68	\\
3	&	23.35	&	20.89	&	21.06	&	11.84	&	1.68	&	1.70	&	0.00	&	0.01	\\
4	&	23.32	&	22.18	&	22.35	&	13.85	&	1.90	&	1.92	&	0.37	&	0.38	\\
5	&	23.55	&	22.71	&	22.88	&	2.39	&	5.92	&	5.94	&	1.12	&	1.13	\\
6	&	9.14	&	13.12	&	13.29	&	6.38	&	2.19	&	2.22	&	0.36	&	0.38	\\
7	&	10.34	&	13.39	&	13.56	&	9.85	&	2.08	&	2.10	&	0.39	&	0.41	\\
8	&	15.54	&	13.06	&	13.23	&	17.56	&	0.91	&	0.94	&	-0.47	&	-0.45	\\
9	&	35.85	&	32.43	&	32.60	&	32.35	&	4.92	&	4.95	&	3.49	&	3.50	\\
10	&	18.34	&	19.59	&	19.76	&	3.70	&	3.71	&	3.74	&	0.40	&	0.41	\\
11	&	23.52	&	20.53	&	20.70	&	12.93	&	1.16	&	1.19	&	-0.38	&	-0.37	\\
12	&	18.98	&	13.72	&	13.89	&	20.87	&	2.45	&	2.48	&	1.24	&	1.26	\\
13	&	17.56	&	13.64	&	13.82	&	12.38	&	1.70	&	1.73	&	0.23	&	0.25	\\
14	&	14.57	&	15.72	&	15.89	&	3.56	&	3.45	&	3.47	&	-0.06	&	-0.05	\\
15	&	11.07	&	12.53	&	12.70	&	7.58	&	1.84	&	1.86	&	-0.23	&	-0.22	\\
16	&	11.09	&	11.21	&	11.38	&	8.49	&	1.74	&	1.76	&	-0.24	&	-0.22	\\
17	&	11.01	&	13.48	&	13.65	&	9.34	&	1.61	&	1.63	&	-0.15	&	-0.14	\\
18	&	23.12	&	20.78	&	20.95	&	13.98	&	1.37	&	1.40	&	-0.12	&	-0.11	\\
19	&	21.08	&	24.15	&	24.32	&	15.19	&	1.80	&	1.82	&	0.40	&	0.42	\\
20	&	13.18	&	12.44	&	12.61	&	13.63	&	2.09	&	2.11	&	0.56	&	0.57	\\
21	&	9.90	&	13.16	&	13.33	&	9.28	&	1.65	&	1.67	&	-0.03	&	-0.01	\\
22	&	19.66	&	14.38	&	14.56	&	7.66	&	2.46	&	2.48	&	1.02	&	1.04	\\
23	&	13.78	&	16.86	&	17.03	&	4.04	&	3.11	&	3.13	&	0.38	&	0.39	\\
24	&	14.34	&	10.11	&	10.28	&	9.91	&	1.64	&	1.67	&	0.16	&	0.17	\\
25	&	20.58	&	22.30	&	22.47	&	15.07	&	2.42	&	2.45	&	0.97	&	0.99	\\
26	&	18.90	&	15.11	&	15.28	&	15.24	&	1.00	&	1.03	&	-0.37	&	-0.35	\\
27	&	17.00	&	18.60	&	18.77	&	12.95	&	1.37	&	1.40	&	-0.21	&	-0.19	\\
28	&	9.72	&	9.62	&	9.79	&	7.18	&	2.24	&	2.26	&	0.09	&	0.10	\\
29	&	14.06	&	12.94	&	13.12	&	10.23	&	1.71	&	1.74	&	-0.06	&	-0.04	\\
30	&	10.94	&	6.72	&	6.89	&	11.35	&	1.88	&	1.91	&	0.50	&	0.52	\\
31	&	14.66	&	13.28	&	13.45	&	5.52	&	2.48	&	2.51	&	-0.03	&	-0.01	\\
32	&	29.69	&	24.44	&	24.61	&	13.18	&	2.62	&	2.65	&	1.38	&	1.40	\\
33	&	23.76	&	22.85	&	23.02	&	3.10	&	4.76	&	4.79	&	0.94	&	0.95	\\
34	&	13.90	&	16.58	&	16.75	&	5.70	&	2.29	&	2.31	&	-0.01	&	0.01	\\
35	&	18.19	&	13.64	&	13.81	&	11.92	&	1.81	&	1.84	&	0.48	&	0.50	\\
36	&	13.91	&	13.64	&	13.81	&	3.95	&	3.07	&	3.10	&	-0.25	&	-0.23	\\
37	&	16.09	&	21.50	&	21.68	&	11.14	&	1.52	&	1.54	&	0.24	&	0.26	\\
38	&	12.60	&	13.43	&	13.60	&	10.35	&	2.53	&	2.56	&	0.83	&	0.84	\\
39	&	14.61	&	13.92	&	14.09	&	3.73	&	3.40	&	3.42	&	-0.01	&	0.00	\\
40	&	20.37	&	14.60	&	14.77	&	18.53	&	1.04	&	1.07	&	-0.15	&	-0.14	\\
41	&	18.74	&	21.21	&	21.38	&	14.57	&	1.49	&	1.52	&	0.02	&	0.04	\\
42	&	12.87	&	15.77	&	15.94	&	12.94	&	1.98	&	2.01	&	0.46	&	0.47	\\
43	&	16.09	&	16.10	&	16.27	&	11.94	&	1.92	&	1.95	&	0.28	&	0.30	\\
44	&	21.95	&	21.38	&	21.55	&	3.38	&	4.05	&	4.07	&	0.38	&	0.40	\\
45	&	11.27	&	9.76	&	9.93	&	9.45	&	2.28	&	2.31	&	0.50	&	0.51	\\
46	&	11.15	&	10.10	&	10.27	&	11.95	&	2.45	&	2.48	&	0.86	&	0.88	\\
47	&	16.40	&	14.96	&	15.13	&	11.51	&	1.20	&	1.22	&	-0.49	&	-0.47	\\
48	&	12.26	&	13.17	&	13.34	&	9.33	&	1.85	&	1.87	&	0.01	&	0.02	\\
49	&	18.76	&	22.25	&	22.42	&	13.73	&	3.81	&	3.83	&	2.46	&	2.48	\\
50	&	7.60	&	11.87	&	12.04	&	6.41	&	2.74	&	2.76	&	0.97	&	0.98	\\
51	&	11.74	&	11.70	&	11.87	&	8.86	&	2.08	&	2.10	&	0.17	&	0.19	\\
\end{longtable}
\end{center}

\begin{figure}[htbp]
\begin{center}
\includegraphics{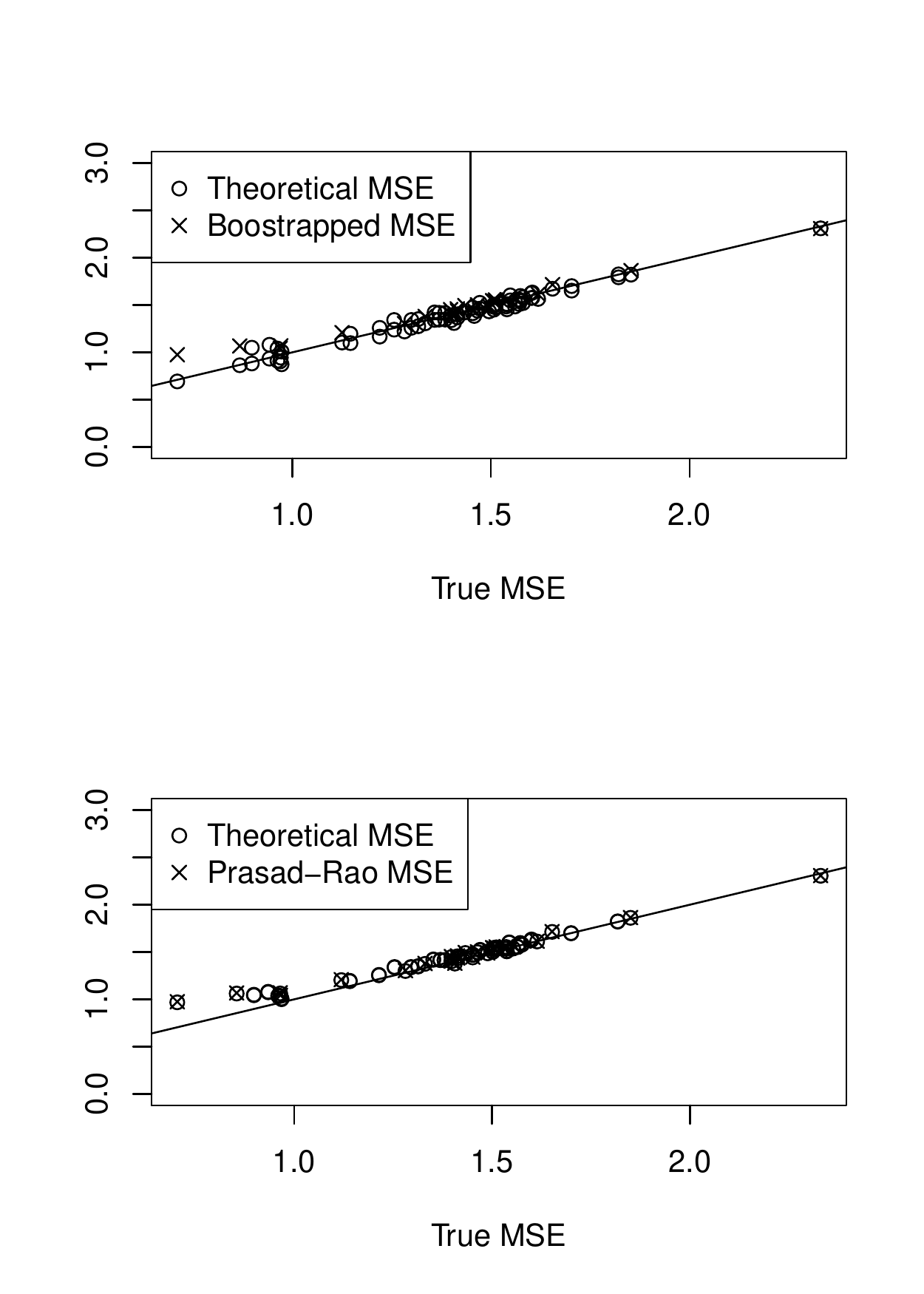}
\caption{Comparing Simulated MSEs with True MSEs}
\label{fig:sims}
\end{center}
\end{figure}
\newpage
\renewcommand{\theequation}{B.\arabic{equation}}
\section*{Appendix B} 
\begin{proof}[Proof of Theorem 1]
Observe that
\begin{align}
\label{bigeqn}
E[(\theblupi - \thi)^2] &= 
E[(\thbi -\thi)^2] + E[(\theblupi - \thbi)^2] \notag\\
&= E[(\thbi - \theta_i)^2] + E[(\thbi - \thebi - t + \thbarweb )^2] \notag\\
&= E[(\thbi - \theta_i)^2] + E[(\thbi - \thebi + \thbarweb - \thbarwb + \thbarwb - t)^2] \notag\\
&= E[(\thbi - \theta_i)^2] + E[ (\thebi-\thbi)^2] + E[(\thbarwb - t)^2] + E[(\thbarweb - \thbarwb)^2]  \notag \\ &-2E[(\thebi - \thbi )(\thbarweb - \thbarwb)] -
 \;  2E[(\thebi - \thbi)(\thbarwb - t)] \notag \\
&+ 2E[(\thbarweb - \thbarwb)(\thbarwb - t)].
\end{align}

Next, observe that $E[(\thbi - \theta_i)^2] + E[ \thebi-\thbi]^2 = g_{1i}(\su) + g_{2i}(\su) + g_{3i}(\su) + o(m^{-1}),$ by \mbox{\citet*{prasad_1990}}, where 
\begin{align*}
g_{1i}(\su) &= B_i \su\\
g_{2i}(\su) &= B_i^2 h_{ii}^V\\
g_{3i}(\su) &= B_i^3 D_i^{-1} \V(\sut).
\end{align*}
It may be noted that while $g_{1i}(\su) = O(1),$ both
$g_{2i}(\su)$ and $g_{3i}(\su)$ are of order $O(m^{-1}),$ as shown in \citet*{prasad_1990}. We show that 
$E[(\thbarwb - t)^2]= g_4(\su) = O(m^{-1}),$ whereas the remaining four terms of expression (\ref{bigeqn}) are of order $o(m^{-1}).$

First, we show that $E[(\thbarwb - t)^2] = g_4(\su).$ We
write $\thbarwb - t = -\sum_{i=1}^mw_iB_i(\thhi -\bxi^T\btil)$ and consider
\begin{align}
\label{g4}
E[(\thbarwb - t)^2]
&=E\lb \left\{\sum_{i=1}^m  w_i B_i(\thhi -\bxi^T\btil)\right\}^2\rb\notag \\
&= \sum_{i=1}^m  w_i^2 B_i^2 E[(\thhi -\bxi^T\btil)^2]
+  \sum_{i \neq j}  w_i w_j B_i B_j 
E[(\thhi -\bxi^T\btil)(\thhj -\bxj^T\btil)]\notag\\
&= \sum_{i=1}^m  w_i^2 B_i^2 (V_i - h_{ii}^V)
+  \sum_{i \neq j}  w_i w_j B_i B_j (-h_{ij}^V)\notag\\
&= \sum_{i=1}^m  w_i^2 B_i^2 V_i - \sum_{i=1}^m \sum_{j=1}^m w_i w_j B_i B_j h_{ij}^V.
\end{align}
Note that the expression on the right hand side of 
(\ref{g4}) is $O(m^{-1})$ since $\ma h_{ii} = O(m^{-1}),$ which implies that 
$\max_{1 \leq i \leq j \leq m} h_{ij}^V = O(m^{-1}).$

%
%

Next, we return to (\ref{bigeqn}) and show that \mbox{$E[(\thbarweb - \thbarwb)^2] = o(m^{-1}).$} 
Consider that
\begin{align}
\label{thbeb_thbw}
&E[(\thbarweb - \thbarw)^2] 
 = \sum_i w_i^2E\left[  (\thebi-\thbi)^2\right] + 2\sum_{i=1}^{m-1} \sum_{j=i+1}^m w_i w_jE\left[ 
(\thebi-\thbi)
(\thebj-\thbj)\right]\notag\\
&= 2\sum_{i=1}^{m-1} \sum_{j=i+1}^m w_i w_jE\left[ 
(\thebi-\thbi)
(\thebj-\thbj)\right] + o(m^{-1}),
\end{align}
since $\sum_i w_i^2 E[(\thebi - \thbi)^2] = o(m^{-1}). $ The latter holds because $ E[(\thebi - \thbi)^2] = g_{2i}(\su) + g_{3i}(\su) = O(m^{-1}),$
$\ma w_i = O(m^{-1}),$ and $\sum_i w_i = 1.$
Thus, it suffices to show $E\left[ 
(\thebi-\thbi)
(\thebj-\thbj)\right] = o(m^{-1})$ for all $i \neq j,$ and we do so by expanding $\thebi$ about $\thbi.$
For simplicity of notation, denote $\dfrac{\partial \thbi}{\partial \su} = 
\dfrac{\partial \thbi(\su)}{\partial \su}$ and 
$\dfrac{\partial^2 \thbis}{\partial (\su)^2} = \dfrac{\partial^2 \thbi(\sus)}{\partial (\su)^2}.$
Then
$$ \thebi - \thbi = \frac{\partial \thbi}{\partial \su}(\suh - \su) + \frac{1}{2} \frac{\partial^2 \thbis}{\partial (\su)^2}(\suh - \su)^2$$ for some $\sus$ between $\su$ and $\suh.$
 The expansion of 
$\thebj$ about $\thbj$ is similar.

Consider $E[(\thebi-\thbi)(\thebj-\thbj)]$ for 
$ i \neq j.$
Notice that
\begin{align}
E[(\thebi-\thbi)(\thebj-\thbj)]&= E\left[
\frac{\partial \thbi}{\partial \su}\frac{\partial \thbj}{\partial \su} (\suh - \su)^2
\right] 
+ \frac{1}{2}E\left[\frac{\partial \thbi}{\partial \su}
\frac{\partial^2 \thbjs }{\partial (\sigma_u^{2})^2} (\suh - \su)^3\right] \notag \\ 
&+ \frac{1}{2}E\left[
\frac{\partial^2 \thbis}{\partial (\sigma_u^{*2})^2}
 \frac{\partial \thbj}{\partial \su}(\suh - \su)^3
\right] 
+ \frac{1}{4}E\left[\frac{\partial^2 \thbis}{(\partial \su)^2}
\frac{\partial^2 \thbjs }{\partial^2 (\sigma_u^{2})^2} (\suh - \su)^4\right] \notag \\
&:= R_0 + R_1 + R_2 + R_3.\notag
\end{align}
In $R_1,$ 
\begin{align}
\label{r1}
 E\left[\frac{\partial \thbi}{\partial \su}
\frac{\partial^2 \thbjs}{\partial (\su)^2}(\suh - \su)^3 \right] \notag
&= 
E\left[\frac{\partial \thbi}{\partial \su}
\frac{\partial^2 \thbjs}{\partial (\su)^2}(\sut - \su)^3 I(\sut > 0) \right] \\
&- E\left[\frac{\partial \thbi}{\partial \su}
\frac{\partial^2 \thbjs}{\partial (\su)^2} (\su)^3 I(\sut \leq 0) \right].
\end{align}
Observe that
\begin{align*}
E\left[\frac{\partial \thbi}{\partial \su}
\frac{\partial^2 \thbjs}{\partial (\su)^2} (\su)^3 I(\sut \leq 0) \right] 
&\leq \sigma_u^6
E^\frac{1}{4}\lb\left\{\frac{\partial \thbi}{\partial \su}\right\}^4\rb
E^\frac{1}{4}\lb\left\{\frac{\partial^2 \thbjs}{\partial (\su)^2}\right\}^4\rb
P^\frac{1}{2}(\sut \leq 0)\\
&\leq
\sigma_u^6
E^\frac{1}{4}\lb 
\left
\{\frac{\partial \thbi}{\partial \su}\right\}^4\rb
E^\frac{1}{4}\lb 
\sup_{\su \geq 0}
\left\{ 
\frac{\partial^2 \thbj}{\partial (\su)^2}\right\}^4\rb
P^\frac{1}{2}(\sut \leq 0)\\
&= o(m^{-r})
\end{align*}
for all $r > 0$ by Lemmas \ref{part_thibs_sq} (ii) and \ref{lem_comb2}, which we have proved in Appendix A. Also,  $P(\sut \leq 0) = O(m^{-r}) \;\forall \; r>0,$ as proved in Lemma A.6 of \citet*{prasad_1990}.
Now
\begin{align}
E\left[\frac{\partial \thbi}{\partial \su}
\frac{\partial^2 \thbjs}{\partial (\su)^2}(\sut - \su)^3 I(\sut > 0) \right] \notag
&= E\left[\frac{\partial \thbi}{\partial \su}
\frac{\partial^2 \thbjs}{\partial (\su)^2}(\sut - \su)^3  \right]\\
&- E\left[\frac{\partial \thbi}{\partial \su}
\frac{\partial^2 \thbjs}{\partial (\su)^2}(\sut - \su)^3 I(\sut \leq 0) \right], \label{r1_1}
\end{align}
where  the second term expression in (\ref{r1_1}) is $O(m^{-r})$ since $P(\sut \leq 0) = O(m^{-r}) \;\forall \; r>0.$ We next observe that \begin{align*}
E\left[\frac{\partial \thbi}{\partial \su}
\frac{\partial^2 \thbjs}{\partial (\su)^2}(\sut - \su)^3  \right]
&\leq E^\frac{1}{4}\lb\left\{\frac{\partial \thbi}{\partial \su}\right\}^4\rb
E^\frac{1}{4}\lb\left\{
\frac{\partial^2 \thbjs}{\partial (\su)^2}
\right\}^4\rb
E^\frac{1}{2}[(\sut - \su)^6]\\
&\leq
E^\frac{1}{4}\lb\left\{\frac{\partial \thbi}{\partial \su}\right\}^4\rb
E^\frac{1}{4}\lb
\sup_{\su \geq 0}
\left\{
\frac{\partial^2 \thbj}{\partial (\su)^2}
\right\}^4\rb
E^\frac{1}{2}[(\sut - \su)^6] \\
&= O(m^{-3/2})
\end{align*}
since $E[(\sut - \su)^{2r}] = O(m^{-r})$ for any $r\geq 1$
by Lemma A.5 in \citet*{prasad_1990}.
This proves that $R_1 = o(m^{-1})$ since $\ma w_i = O(m^{-1}).$ By symmetry, $R_2$ is also $o(m^{-1}).$ Finally, we show that $R_3$ is 
$o(m^{-1}).$
Using a similar calculation involving $R_1,$ we can show that 
\begin{align}
\label{r3_1}
E\left[\frac{\partial^2 \thbis}{(\partial \su)^2}
\frac{\partial^2 \thbjs}{\partial^2 (\sigma_u^{2})^2}(\suh - \su)^4 \right] &=
E\left[\frac{\partial^2 \thbis}{(\partial \su)^2}
\frac{\partial^2 \thbjs}{\partial^2 (\sigma_u^{2})^2} (\sut - \su)^4\right] + o(m^{-r}).
\end{align}
Observe now that 
\begin{align*}
E\left[\frac{\partial^2 \thbis}{(\partial \su)^2}
\frac{\partial^2 \thbjs}{\partial^2 (\sigma_u^{2})^2} (\sut - \su)^4\right] &\leq
E^\frac{1}{4}\lb
\left\{
\frac{\partial^2 \thbis}{(\partial \su)^2}
\right\}^4\rb
E^\frac{1}{4}\lb\left\{
\frac{\partial^2 \thbjs}{\partial^2 (\sigma_u^{2})^2}
\right\}^4\rb
E^\frac{1}{2}\lb
(\sut - \su)^8
\rb\\
&\leq
E^\frac{1}{4}\lb
\sup_{\su \geq 0}
\left\{
\frac{\partial^2 \thbi}{(\partial \su)^2}
\right\}^4\rb
E^\frac{1}{4}\lb
\sup_{\su \geq 0}
\left\{
\frac{\partial^2 \thbj}{\partial^2 (\sigma_u^{2})^2}
\right\}^4\rb
E^\frac{1}{2}\lb
(\sut - \su)^8
\rb \\
 &= O(m^{-2}).
\end{align*}
Plugging this back into (\ref{r3_1}), we find that 
$E\left[\dfrac{\partial^2 \thbis}{(\partial \su)^2}
\dfrac{\partial^2 \thbjs}{\partial^2 (\sigma_u^{2})^2}(\suh - \su)^4 \right] = o(m^{-1}).$ Hence, $R_3$ is $o(m^{-1}).$ Finally, by calculations similar to those used for (\ref{r1}), we find that 
\begin{align*}
R_0 &= E\left[
\frac{\partial \thbi}{\partial \su}\frac{\partial \thbj}{\partial \su}(\suh - \su)^2 
\right] =
E\left[
\frac{\partial \thbi}{\partial \su}\frac{\partial \thbj}{\partial \su}(\sut - \su)^2 
\right] + o(m^{-r}).
\end{align*}
Take $\sig = V - X(X^TV^{-1}X)^{-1}X^T = (I-P_X^V)V,$ where $P_X = X(X^TV^{-1}X)^{-1}X^T,$ write $P_X^V = X(X^TV^{-1}X)^{-1}X^TV^{-1},$
and let $\bei$ be the $i$th unit vector.
We can show $\dfrac{\partial \thbi}{\partial \su} = B_i\bei^T\sig V^{-2}\util,$ where 
$\util = \that - X\btil.$ Define $A_{ij} = B_iB_jV^{-2}\sig\bei \bej^T\sig V^{-2} $and consider 
\begin{align*}
E\left[
\dfrac{\partial \thbi}{\partial \su}\dfrac{\partial \thbj}{\partial \su}(\sut - \su)^2 
\right]
&= E[\util^T A_{ij}\util (\sut - \su)^2]\\
&= \cov(\util^T A_{ij}\util, (\sut - \su)^2) + E[\util^T A_{ij}\util ]E[(\sut - \su)^2].
\end{align*}
Using Lemma \ref{trab} and the relation $(I-P_X)\sig = (I-P_X)V$,
\begin{align}
\label{cov_t4}
&\cov(\util^T A_{ij}\util, (\sut - \su)^2)\notag\\
 &\qquad=
(m-p)^{-2}\cov(\util^T A_{ij}\util, [\util^T(I-P_X)\util - \tr\{(I-P_X)V\}]^2)\\
&\qquad= (m-p)^{-2}\cov(\util^T A_{ij}\util, [\util^T(I-P_X)\util]^2 ) \notag\\
&\qquad-2(m-p)^{-2}\cov(\util^T A_{ij}\util, \util^T(I-P_X)\util ) \tr\{(I-P_X)V\}\notag\\
&\qquad= (m-p)^{-2}\bigg\{4\tr\{A_{ij}V(I-P_X)V \}\tr\{(I-P_X)V\}\notag\\ 
&\qquad+ 8\tr\{A_{ij}V(I-P_X)V (I-P_X)V\}\notag\\
&\qquad- 
4\tr\{A_{ij}V(I-P_X)V \}\tr\{(I-P_X)V\} \bigg\}\notag\\
&\qquad= 8(m-p)^{-2}\tr\{A_{ij}V(I-P_X)V (I-P_X)V\}.\notag\\
&\qquad= 8(m-p)^{-2}B_iB_j\bej^T \sig V^{-1}(I-P_X)V(I-P_X)V^{-1} \sig \bei, \notag
\end{align}
where $\text{tr}$ denotes the trace. 
Observe that $(I-P_X)V^{-1} \sig = I - (\pxv)^T$ and
$ (I - \pxv)V(I - (\pxv)^T) = \sig.$ Then
\begin{align*}
\cov(\util^T A_{ij}\util, (\sut - \su)^2) &= 
8(m-p)^{-2}B_iB_j\bej^T \sig V^{-1}(I-P_X)V(I-P_X)V^{-1} \sig \bei \\
&= 8(m-p)^{-2}B_iB_j\bej^T (I-\pxv)V(I - (\pxv)^T) \bei\\
&= 8(m-p)^{-2}B_iB_j\bej^T \sig \bei\\
&= 8(m-p)^{-2}B_iB_j\bej^T V \bei + O(m^{-3}) = O(m^{-3}),
\end{align*}
since the first term is zero because $i \neq j$ and $V$ is diagonal.
We now calculate $$E[\util^T A_{ij} \util] = 
\tr\{B_iB_j V^{-2} \sig \bei \bej^T \sig V^{-2} \sig \}
= B_iB_j \bej^T \sig V^{-2} \sig  V^{-2} \sig \bei.$$
Observe that $\sig V^{-2} \sig = 
I - (\pxv)^T - \pxv + \pxv(\pxv)^T.$ Then, after some computations, we find that $E[\util^T A_{ij} \util] =
B_iB_j \bej^T V^{-1} \bei + O(m^{-1}) = O(m^{-1})$ since $i \neq j.$
By Lemma \ref{sut_sq}, $E[(\sut - \su)^2] = 2(m-p)^{-2}\sum_{k=1}^m(\su + D_k)^2 + O(m^{-2}).$ 
Then
$$
E[\util^T A_{ij} \util]E[(\sut - \su)^2] = o(m^{-1}),$$
since $i \neq j.$
This implies that 
$R_0 = o(m^{-1}),$ which in turn implies that
\begin{align}
\label{ebi_bi_ebj_bj}
E[(\thebi - \thbi)(\thebj - \thbj)] = o(m^{-1}) \;\;\text{for}\;\; i \neq j,
\end{align}
since
$R_0, R_1, R_2,$ and  $R_3$ are all $o(m^{-1}).$
 Finally, this and (\ref{thbeb_thbw}) establishes that 
$
E[(\thbarweb - \thbarw)^2] = o(m^{-1}).$

We return to (\ref{bigeqn}) to show that
$E[(\thebi - \thbi )(\thbarweb - \thbarwb)] = o(m^{-1}).$
By the Cauchy-Schwarz inequality, we find that
\begin{align*}
E[(\thebi - \thbi )(\thbarweb - \thbarwb)] 
&\leq
E^{\frac{1}{2}}\lb
(\thebi - \thbi )^2
\rb
E^{\frac{1}{2}}\lb
(\thbarweb - \thbarwb)^2
\rb = o(m^{-1}),
\end{align*}
since the first term is $O(m^{-1/2})$ and the second term is $o(m^{-1/2}).$

For the next term of (\ref{bigeqn}), we are interested in showing that $E[(\thebi - \thbi)(\thbarwb - t)] = o(m^{-1}).$ First, by Taylor expansion, we find that
$$\thebi - \thbi = 
\dfrac{\partial \thbi}{\partial \su} (\suh - \su)
+ \frac{1}{2}\dfrac{\partial^2 \thbis}{\partial (\su)^2}
(\suh - \su)^2$$ for some $\sus$ between $\su$ and $\suh.$
Consider that
$\thbarw - t = -\sum_iw_iB_i(\thhi - \bxi^T\btil).$ Then
\begin{align*}
E[(\thebi - \thbi)(\thbarwb - t)] & =
-\sum_jw_jB_jE\lb\dfrac{\partial \thbi}{\partial \su} (\suh - \su)(\thhj - \bxj^T\btil)\rb \\
&-\frac{1}{2}\sum_jw_jB_jE\lb\dfrac{\partial^2 \thbis}{\partial (\su)^2} (\suh - \su)^2(\thhj - \bxj^T\btil)\rb 
:= R_4 + R_5.
\end{align*}
Observe that
\begin{align}
\label{t5_ind}
E\lb\dfrac{\partial \thbi}{\partial \su} (\suh - \su)(\thhj - \bxj^T\btil)\rb
&= -\su E\lb\dfrac{\partial \thbi}{\partial \su}  (\thhj - \bxj^T\btil) I(\sut \leq 0)\rb \\
&+ E\lb\dfrac{\partial \thbi}{\partial \su}  (\sut - \su)(\thhj - \bxj^T\btil) I(\sut > 0)\rb \notag \\
&= E\lb\dfrac{\partial \thbi}{\partial \su}  (\sut - \su)(\thhj - \bxj^T\btil) I(\sut > 0)\rb + o(m^{-r}) \notag \\
&= E\lb\dfrac{\partial \thbi}{\partial \su}  (\sut - \su)(\thhj - \bxj^T\btil) \rb \notag \\ 
&- E\lb\dfrac{\partial \thbi}{\partial \su}  (\sut - \su)(\thhj - \bxj^T\btil) I(\sut \leq 0)\rb 
+ o(m^{-r}) \notag \\
&= E\lb\dfrac{\partial \thbi}{\partial \su}  (\sut - \su)(\thhj - \bxj^T\btil) \rb + o(m^{-r}) \notag
\end{align}
 since we may observe that $E\lb\dfrac{\partial \thbi}{\partial \su}  (\su) (\thhj - \bxj^T\btil) I(\sut \leq 0)\rb = o(m^{-r})$
and $E\lb\dfrac{\partial \thbi}{\partial \su}  (\sut - \su)(\thhj - \bxj^T\btil) I(\sut \leq 0)\rb 
= o(m^{-r}).$
Now, note that $
\dfrac{\partial \thbi}{\partial \su} = B_i\bei^T \sig V^{-2} \util, 
$ and write $D_{ij} = B_i V^{-2}\sig\bei \bej^T.$
Then by calculations similar to those in (\ref{cov_t4}), we find
\begin{align*}
E\lb\dfrac{\partial \thbi}{\partial \su}  (\sut - \su)(\thhj - \bxj^T\btil) \rb
&= \cov(\util^T D_{ij}\util, \sut - \su) \\
&= (m-p)^{-1}\cov(\util^T D_{ij}\util, \util^T(I-P_X)\util - \tr\{(I-P_X)V\})\\
&= 2(m-p)^{-1}\tr\{ 
D_{ij}V(I-P_X)V
\}\\
&= 2(m-p)^{-1}\tr\{ 
B_i V^{-2}\sig\bei \bej^T V(I-P_X)V
\}\\
&= 2(m-p)^{-1}
B_i\bej^T V(I-P_X)V^{-1}\sig\bei \\
&=2(m-p)^{-1} B_i \bej^T V(I - (P_X^V)^T)\bei \\
&= 2(m-p)^{-1} B_i [\bej^T V\bei - h_{ij}^V]\\
&= 2(m-p)^{-1} B_i \bej^T V\bei + o(m^{-1}).
\end{align*}
With this, we find that 
$$\sum_j w_j B_j E\lb\dfrac{\partial \thbi}{\partial \su}  (\sut - \su)(\thhj - \bxj^T\btil) \rb =2(m-p)^{-1}B_i^2 w_i (\su + D_i) + o(m^{-1})
= o(m^{-1}).$$ Hence, 
$R_4$ is $o(m^{-1}).$ 
We now show that $R_5 = o(m^{-1}).$ By calculations similar to those in (\ref{t5_ind}),
\begin{align*}
&\sum_j w_j B_j E\lb\dfrac{\partial^2 \thbis}{\partial (\su)^2} (\suh - \su)^2(\thhj - \bxj^T\btil)\rb \\
&\qquad=\sum_j w_j B_j E\lb\dfrac{\partial^2 \thbis}{\partial (\su)^2} (\sut - \su)^2(\thhj - \bxj^T\btil)\rb + o(m^{-r}).
\end{align*}
Recall that $E\lb \left\{
\sum_j w_j B_j (\thhj - \bxj^T\btil)
\right\}^2
\rb = O(m^{-1})$ by (\ref{g4}).
Now note that
\begin{align*}
&\sum_j w_j B_j E\lb\dfrac{\partial^2 \thbis}{\partial (\su)^2} (\sut - \su)^2(\thhj - \bxj^T\btil)\rb \\
& \qquad \leq  E^{\frac{1}{4}}
\lb \left\{
\dfrac{\partial^2 \thbis}{\partial (\su)^2}  
\right\}^4
\rb
E^{\frac{1}{4}}
\lb 
(\sut - \su)^8 
\rb
E^{\frac{1}{2}}
\lb \left\{
\sum_j w_j B_j (\thhj - \bxj^T\btil)
\right\}^2
\rb\\
&
\qquad \leq  E^{\frac{1}{4}}
\lb \left\{
\sup_{\sigma_u^2 \geq 0} \dfrac{\partial^2 \thbi}{\partial (\su)^2}  
\right\}^4
\rb
E^{\frac{1}{4}}
\lb 
(\sut - \su)^8 
\rb
E^{\frac{1}{2}}
\lb \left\{
\sum_j w_j B_j (\thhj - \bxj^T\btil)
\right\}^2
\rb \\
 & \qquad=O(m^{-3/2})
\end{align*}
by Lemma \ref{lem_comb1}(ii), by Theorem A.5 of \citet*{prasad_1990}, and by expression~(\ref{g4}). Thus, $R_5$ is $o(m^{-1}),$ and
 $E[(\thebi - \thbi)(\thbarwb - t)] = o(m^{-1}).$ 
 
%
%
For the last term in (\ref{bigeqn}), we use the 
the Cauchy-Schwartz inequality to show
\begin{align*}
E[(\thbarweb - \thbarwb)(\thbarwb - t)]
&\leq E^\frac{1}{2}[(\thbarweb - \thbarwb)^2]E^\frac{1}{2}[(\thbarwb - t)^2] = o(m^{-1}).
\end{align*}
This concludes the proof of the theorem. \qedhere
\end{proof}

\bibliography{mse_paper_final}
\bibliographystyle{asa.bst}
\nocite{rao_2003, you_2003, you_2004, wang_2008}

\vskip .65cm
\noindent
Department of Statistics, Carnegie Mellon University, Baker Hall 232K, Pittsburgh, PA 15213, 
U.S.A.
\vskip 2pt
\noindent
E-mail: (beka@cmu.edu)
\vskip 2pt
\noindent
Department of Statistics, University of Florida, P.O. Box 118545, Gainesville, FL 32611-8545, 
U.S.A.
\vskip 2pt
\noindent
E-mail: (ghoshm@stat.ufl.edu)
\vskip .3cm

\end{document}